\documentclass[a4paper,fleqn]{cas-dc}
\usepackage{amsmath,amssymb,amsfonts}
\usepackage{mathtools}
\usepackage{bm}
\usepackage{amsthm}

\usepackage{graphicx}
\usepackage{subcaption}
\usepackage{booktabs}
\usepackage{threeparttable}
\usepackage{tabularx}
\usepackage{multirow}
\usepackage{longtable}
\usepackage{tablefootnote}
\usepackage{float}
\usepackage{lscape}
\usepackage{xcolor,colortbl}
\usepackage{wrapfig}
\usepackage{stfloats}
\usepackage{placeins}
\usepackage{algorithm}
\usepackage[noend]{algpseudocode}
\usepackage{algcompatible}

\usepackage{textcomp}
\usepackage{xspace}
\usepackage{pifont}
\usepackage{calc}
\usepackage{fancyhdr}
\usepackage[shortlabels]{enumitem}
\usepackage{url}
\usepackage{varioref}
\usepackage{tikz}
\usepackage{pgfgantt}
\usepackage[normalem]{ulem}  
\usepackage{lipsum}          
\usepackage{ulem}
\usepackage{xcolor}
\usepackage{times}
\usepackage{soul}
\usepackage[numbers]{natbib}



{\bfseries}{\rmfamily}
{\bfseries}{\rmfamily}
{\bfseries}{\rmfamily}
\newtheorem{definition}{Definition}[section]
\newcommand{\hitsec}{\ensuremath {\texttt{HITFS}   {\xspace} }}
\newcommand{\borg}{\ensuremath {\texttt{BORG}   {\xspace} }}

\newcommand{\setup}{\ensuremath {\texttt{Setup} {\xspace}}}
\newcommand{\keyextract}{\ensuremath {\texttt{Extract} {\xspace}}}
\newcommand{\sign}{\ensuremath {\texttt{Sign} {\xspace}}}
\newcommand{\verify}{\ensuremath {\texttt{Verify} {\xspace}}}
\newcommand{\preprocess}{\ensuremath {\texttt{Preprocess} {\xspace}}}
\newcommand{\keygen}{\ensuremath {\texttt{KeyGen} {\xspace}}}
\newcommand{\mverify}{\ensuremath {\texttt{MVerify} {\xspace}}}
\newcommand{\pof}{\ensuremath {\texttt{PoF} {\xspace}}}
\newcommand{\fverify}{\ensuremath {\texttt{PoFVerify} {\xspace}}}

\newcommand{\amf}{\ensuremath {\mathit{AMF} {\xspace}}}

\newcommand{\bs}{\ensuremath {\mathit{BS} {\xspace}}}

\newcommand{\ue}{\ensuremath {\mathit{UE} {\xspace}}}

\newcommand{\zq}{\ensuremath \mathbb{Z}_{q}{\xspace}}
\newcommand{\id}{\ensuremath {\mathit{ID}}{\xspace}}

\newcommand{\sk}{\ensuremath {\mathit{sk}}{\xspace}}

\newcommand{\pk}{\ensuremath {\mathit{PK}}{\xspace}}
\newcommand{\ts}{\ensuremath {\mathit{TS}}{\xspace}}

\newcommand{\A}{$\mathcal{A}$~}

\newcommand{\C}{$\mathcal{C}$~}

\newcommand{\as}{\ensuremath {\leftarrow}{\xspace}}

\newcommand{\eufsidcmia}{\ensuremath {\textit{EUF}\mbox{-}\textit{sID}\mbox{-}\textit{CMIA}}{\xspace}}

\newcommand{\thpq}{\ensuremath {\texttt{ThPQ}{\xspace}}}
\newcommand{\aggregate}{\ensuremath {\texttt{Aggregate}{\xspace}}}

\newcommand{\PC}[1]{
	\vspace{2px}
	\noindent{\bf \IfEndWith{#1}{:}{#1}{#1:}}
}
\newcommand{\asrand}{\ensuremath {\xleftarrow{\$}}{\xspace}}

\newtheorem{theorem}{Theorem}





\def\tsc#1{\csdef{#1}{\textsc{\lowercase{#1}}\xspace}}
\tsc{WGM}
\tsc{QE}

\begin{document}
\let\WriteBookmarks\relax
\def\floatpagepagefraction{1}
\def\textpagefraction{.001}

\shorttitle{Future-Proofing Authentication Against Insecure Bootstrapping for 5G Networks: Feasibility, Resiliency, and Accountability}    

\shortauthors{Darzi et al.}  

\title [mode = title]{Future-Proofing Authentication Against Insecure Bootstrapping for 5G Networks: Feasibility, Resiliency, and Accountability}



\author[1]{Saleh Darzi}
\ead{salehdarzi@usf.edu}

\author[2]{Mirza Masfiqur Rahman}
\ead{rahman75@purdue.edu}

\author[3]{Imtiaz Karim}
\ead{imtiaz.karim@utdallas.edu}

\author[4]{Rouzbeh Behnia}
\ead{behnia@usf.edu}

\author[1]{Attila Altay Yavuz}
\ead{attilaayavuz@usf.edu}

\author[2]{Elisa Bertino}
\ead{bertino@purdue.edu}

\affiliation[1]{organization={Bellini College of Artificial Intelligence, Cybersecurity and Computing, University of South Florida},
            city={Tampa},
            postcode={33620},
            state={Florida},
            country={USA}}

\affiliation[2]{organization={Department of Computer Science, Purdue University},
            city={West Lafayette},
            postcode={47907},
            state={Indiana},
            country={USA}}

\affiliation[3]{organization={Department of Computer Science, University of Texas at Dallas},
            city={Richardson},
            postcode={75080},
            state={Texas},
            country={USA}}

\affiliation[4]{organization={School of Information Systems at University of South Florida},
            city={Tampa},
            postcode={33620},
            state={Florida},
            country={USA}}

\begin{abstract}
The 5G protocol lacks a robust base station (BS) authentication mechanism during the initial bootstrapping phase, leaving it susceptible to fake BSs, spoofed broadcasts, and large-scale manipulation of System Information Blocks (SIBs). Existing solutions incur high 
communication overhead, rely on centralized trust, and lack accountability and long-term breach resiliency. Given the inevitability of BS compromise and the severe impact of forged SIBs as the root of trust (e.g., fake alerts, tracking, false roaming), distributed trust, verifiable forgery detection, and audit logging are essential yet remain largely unexplored. These challenges are further amplified by the emergence of quantum-capable adversaries. 
While NIST Post-Quantum Cryptography (PQC) standards are widely viewed as a path toward long-term security, their feasibility under 5G's strict packet-size, latency, and broadcast constraints has not been systematically studied. This work presents, to our knowledge, the first comprehensive network-level performance characterization of integrating NIST-PQC standards and conventional digital signatures into 5G BS authentication, showing that direct PQC adoption is impractical due to excessive signature sizes, fragmentation, and protocol-level delays. To address these challenges, we propose BORG, a future-proof authentication framework based on a Hierarchical Identity-Based Threshold Signature with 
Fail-Stop (HITFS) properties. BORG distributes trust across multiple BSs via threshold signing, enables post-mortem verifiable forgery detection, and provides tamper-evident, PQ-secure audit logging, while maintaining compact signatures that fit within a single SIB1 packet without fragmentation and incurring minimal UE overhead, as validated through our real over-the-air 5G testbed implementation.
\end{abstract}

\begin{keywords}
5G Cellular Networks \sep Authentication \sep Network Performance Analysis \sep Transitional Post-Quantum Security
\end{keywords}

\maketitle

\section{Introduction} \label{sec:introduction} 
Despite advancements in next-generation cellular networks, the absence of a secure and efficient bootstrapping mechanism between User Equipments (UEs) and Base Stations (BSs) critically undermines the security of the 5G networks. The bootstrapping protocol enables UEs to connect to BSs and access the core network. However, during the initial Radio Resource Control (RRC) connection, the UE selects a BS based solely on signal strength and broadcast parameters~\cite{wuthier2024fake}, without any verified BS identity and authentication. This absence of a robust or standardized BS authentication leads to severe security attacks, such as fake base stations, phishing, and spoofed emergency alerts~\cite{hussain2019insecure, cao2019survey}. 

\subsection{Prior Works on BS Authentication} 
To address the absence of BS authentication, prior efforts have pursued several broad directions. Certificate-based approaches adapt Public Key Infrastructure (PKI) frameworks to attach digital signatures and certificate chains to SIB messages, providing verifiable BS identity at the cost of significant communication overhead~\cite{ross2024fixing, lee2009extended, zheng1996authentication, FBSSpecifications}. To reduce this overhead, certificate-free designs based on Identity-Based Signatures (IBS) derive BS and AMF keys hierarchically from a master key pre-installed in the USIM, eliminating certificate transmission and achieving more compact footprints~\cite{singla2021look, sun20255g}. Token-based and symmetric schemes offer lightweight pre-authentication without asymmetric overhead but do not protect SIB content itself~\cite{lotto2023baron, perrig2003tesla}. More recent efforts have explored threshold signatures to distribute signing responsibility across multiple BSs~\cite{sengupta2024fast, vikhrova2022multi}, and hybrid constructions have begun addressing long-term security for cellular authentication~\cite{vuppala2023post, ko20255g, scalise2024applied}. 

\looseness-1 Building on these solutions, various optimizations, such as efficient certificate delivery~\cite{gao2021evaluating, al2025innovative}, online-offline signatures~\cite{sengupta2024fast}, and outsourced computation via auxiliary entities~\cite{ramadan2020identity}, have been proposed to reduce signing and certificate overheads~\cite{hussain2019insecure}, with recent works further advancing efficiency through certificate-free designs based on hierarchical IBSs~\cite{yu2024protecting}. Although recent efforts from 3GPP consider protecting the unicast RRC messages, initial bootstrapping messages like SIB1 are still unprotected. In particular, TS 38.331 Annex B.1 states that even after Access Stratum (AS) security activation, SIB1 can be sent without integrity protection and ciphering~\cite{RRCSpec}. Below, we outline critical research gaps and the security requirements for 5G BS authentication. 
\textit{(i)}~\ul{\textit{Lack of Efficient Distributed Authentication in 5G:}} The security of 5G bootstrapping currently depends on a single BS. However, modern cellular deployments increasingly rely on heterogeneous, densely deployed, and physically exposed infrastructures (e.g., small cells, femtocells, and O-RAN units~\cite{polese2023understanding, al2022software}), where software modification and physical access significantly expand the attack surface~\cite{rupprecht2019breaking}. Thus, individual BSs are more susceptible to compromise through rooting, tampering, or other software exploits~\cite{janzen2024oh}. This creates a structural single point of failure: compromise of even one BS can break authentication guarantees for all UEs under its coverage, enabling fake-BS attacks, SIB manipulation, tracking, and DoS. The KT femtocell attack~\cite{kt-femtocell} demonstrates this risk in practice, showing that inexpensive, user-provisioned hardware can be rooted and used to inject malicious SIB1 or control uplink/downlink communication, making single-BS trust inadequate. 

\looseness-1 Furthermore, multi-connectivity, where UEs interact with Master and Secondary gNBs (EN-DC, NE-DC), carrier-aggregation cells, CoMP clusters, and Xn-coordinated nodes, is already the operational norm in 5G~\cite{hussain2019insecure}. Leveraging multiple BSs for authentication therefore aligns with existing architectural practices and removes the fundamental mismatch between multi-BS communication and single-BS trust. Even when a single logical BS serves a region, modern 5G RAN architectures increasingly virtualize and disaggregate BS functions using software-defined and cloud-native principles, allowing protocol and security operations to be executed across multiple virtualized components and compute nodes rather than a single physical entity. Thus, an effective authentication solution must distribute trust across multiple BSs or instances thereof. Threshold signatures (e.g.,~\cite{komlo2021frost}) enable this by requiring a quorum of independent stations to collaboratively authenticate broadcast system information, thereby eliminating the single point of failure, reducing BS/key compromise risks~\cite{de2024performance}, strengthening attacker resistance, and aligning with existing 5G architectural patterns. Despite this need, only a few efforts~\cite{vikhrova2022multi,sengupta2024fast} have explored threshold signatures for 5G authentication. Moreover, UEs resource constraints, limited packet sizes, and frequent broadcasts, especially in distributed BS settings, make efficiency crucial. Thus, the designed authentication must minimize signature, communication, and storage overhead while ensuring fast signing and verification.


\textit{(ii)}~\ul{\textit{Lack of Accountability, Breach Resiliency, and Long-Term Security in 5G Authentication:}} Given that the $SIB$ message serves as the root of trust in 5G bootstrapping, any forged signature effectively gives an attacker control over a UE’s view of the network and enables impactful attacks such as large-scale impersonation, MITM, fake alerts, tracking, stealthy DoS~\cite{darzi2024counter}, and false roaming. Because BS compromises are ultimately inevitable—whether through physical access, software exploits, advanced processing power, or insider threats—prevention alone is insufficient. A robust authentication design must support reliable detectability and distinguish malicious forgeries from benign failures rather than relying solely on idealized security assumptions.

This need becomes even more critical when viewed alongside real-world intrusion behavior: industry evidence shows that security breaches often persist undetected for long periods. Data breach reports (e.g., IBM~\cite{ibmreport}, Verizon \cite{verizonreport}) estimate an average dwell time of roughly 204 days before detection, giving attackers months to escalate privileges, manipulate system behavior, and stage follow-on operations without being noticed. In the case of 5G, the recent SK Telecom data breach shows the attackers were in the system and waited to affect 28 servers before doing the data breach~\cite{sk-telecom}. Similarly, a forged SIB or compromised BS could remain active for extended periods, silently influencing mobility, routing, and service selection. Once an adversary injects a malicious broadcast, it can repeatedly mislead UEs or force attachment to rogue cells, enabling sustained exploitation. Without a tamper-evident mechanism to detect such manipulation, these attacks can remain invisible indefinitely.

Moreover, in security-critical infrastructures, such as 5G, 
the ability to prove that a breach occurred is as important as preventing one. Yet current 5G architecture provides no cryptographically enforced provenance of broadcast signatures, no mechanism to identify compromised signers, and no way to produce irrefutable evidence of misbehavior. This absence prevents operators from halting further exploitation, guiding remediation, or performing forensics after an attack. Verifiable forgery detection and tamper-evident logging are therefore essential for post-mortem analysis, accountability, and long-term integrity, particularly when logs from multiple BSs or core entities can be cross-validated to detect inconsistencies, forks, or replay attempts. However, existing RAN logs are purely operational, lack cryptographic binding, and are susceptible to modification by compromised software or attackers with elevated privileges, leaving no trustworthy link between broadcast signatures and recorded events. Consequently, 5G authentication faces a pressing challenge: the absence of a distributed and verifiable audit logging mechanism that provides long-term security and non-repudiation.

\textit{(iii)}~\ul{\textit{Lack of Systematic Feasibility 
Analysis for Long-Term Security:}} These challenges are further amplified once large-scale quantum computers emerge, as quantum algorithms are expected to undermine the computational hardness assumption of the classical public-key primitives. Combined with the inevitability of BS compromise and practical implementation failures (e.g., side-channel attacks), this means that signature forgeries will occur over time, reinforcing the need to future-proof 5G authentication. This urgency is also reflected in global efforts initiated by the National Institute of Standards and Technology (NIST) on standardizing Post-Quantum Cryptography (PQC)~\cite{darzi2023envisioning} and the migration guidelines of the European Telecommunications Standards Institute (ETSI) and IEEE Standards Association, which emphasize preparing communication systems for the Post Quantum (PQ) era~\cite{etsi}. While preliminary integration of NIST-PQC algorithms into various network protocols, such as TLS~\cite{DBLP:conf/ndss/SikeridisKD20} and PQ-WireGuard, has begun, these efforts reveal significant trade-offs, highlighting the substantial overhead and limited practicality of current PQ schemes in constrained, latency-sensitive mobile environments. For 5G BS authentication, these limitations are even more restrictive due to strict size, timing, and broadcast constraints; for example, initial frame synchronization messages like SIB1 are limited to $372$ bytes and are transmitted periodically with a delay around $160$ ms (discussed in detail in Section~\ref{sec:PerformanceEvaluation}). Yet the feasibility of integrating NIST-PQC signatures into 5G bootstrapping, particularly in distributed or hierarchical settings with protocol-level intricacies, has not been comprehensively analyzed, leaving a critical open question in the evolution of secure 5G authentication.

\subsection{Our Contributions} \label{subsec:Contribution}
To address these challenges, we begin by evaluating the feasibility of integrating NIST-PQC standards into 5G BS authentication through a protocol-level performance analysis. Our findings highlight severe performance bottlenecks that arise when directly applying NIST-PQC signatures to 5G BS authentication. Motivated by these limitations, we present $\borg$, a future-proof authentication framework that provides efficient threshold (distributed) signing with post-mortem (PM) forgery detection and a verifiable auditing mechanism. Our key contributions are as follows:

\textit{\textbf{(i)}}~\ul{\textit{\textbf{Analysis of NIST-PQC Standards and Conventional Alternatives:}}} 
To our knowledge, this is the first in-depth evaluation of the NIST-PQC scheme adoption in the context of 5G BS authentication. Focusing on SIB1 as the critical broadcast message in the bootstrapping process, we show that directly applying NIST-PQC signatures is impractical, as detailed in Section~\ref{sec:feasibility}. Specifically, the large signature and certificate sizes of NIST's primary (lattice-based) PQC standard $\textit{ML\mbox{-}DSA}$ \cite{dang2024module} impose a $12276$-byte communication overhead, requiring fragmentation and causing significant 5G packet delays. This results in a total end-to-end delay of $5282$~ms, which is incompatible with the real-time requirements of 5G BS communication. We also assess conventional hierarchical IBS schemes to broaden the performance profile; while offering good performance~\cite{singla2021look}, they lack support for distributed trust, accountability, and long-term forgery detection.


\looseness-1 
\textit{\textbf{(ii)}}~\ul{\textit{\textbf{$\borg$: An Efficient, Distributed, and Accountable Authentication Framework:}}} Given the infeasibility of NIST-PQC signatures for 5G BS authentication, we design $\borg$, a future-proof authentication framework based on a Hierarchical Identity-Based Threshold Signature with Fail-Stop (HITFS) property. {\em (1)} $\borg$ distributes trust across multiple BSs via threshold signatures, ensuring resiliency even when some BSs are compromised or misbehave. {\em (2)} It provides conventionally secure authentication with post-mortem, verifiable forgery detection through a fail-stop mechanism. Even if the underlying classical hardness assumptions (e.g., discrete logarithms) fail in the future, honest and computationally bounded signers can identify and prove forgeries, including those produced by a quantum-capable adversary, thereby ensuring strong accountability and long-term security. This forgery-detection capability is reinforced through a distributed audit logging mechanism, ensured via PQ-secure threshold signatures, ensuring tamper-evident records, non-repudiation, and long-term system resiliency.  {\em (3)} Despite supporting distributed trust and accountability, $\borg$ maintains compact signatures, low communication overhead, minimal UE computation, and reduced end-to-end delay, eliminating the need for fragmentation. Compared to the existing conventional-secure alternatives like $\textit{Schnorr\mbox{-}HIBS}$ (see TABLE~\ref{tab:PerformanceComparison2} \& \ref{tab:PerformanceComparison3}), $\borg$ achieves similar overhead while additionally offering distributed authentication, accountability, and forgery detection.

\textit{\textbf{(iii)}} \ul{\textit{\textbf{Open-Sourced Evaluation Framework:}}} We fully implemented $\borg$ by incorporating it into a real 5G testbed in srsRAN and then conducted an extensive performance evaluation against existing authentication schemes. Tested with over-the-air 5G communication, $\borg$ demonstrates practical deployability with low computational and communication overhead. Compared to aggregate signature schemes such as $\textit{BLS}$, $\borg$ achieves faster signing and lower end-to-end delay with reduced communication overhead in 5G settings. Relative to IBS schemes~\cite{singla2021look}, $\borg$ attains similar runtime while also providing distributed authentication, audit logging, and post-mortem forgery detection. Moreover, $\borg$ is up to three orders of magnitude faster and incurs $85\times$ less communication overhead than NIST-PQC's $\textit{ML\mbox{-}DSA}$~\cite{dang2024module}. These results highlight $\borg$ as a compact, efficient, and future-proof solution for 5G bootstrapping authentication. To support reproducibility, we publicly release the complete source code of $\borg$\footnote{\fbox{\href{https://github.com/TheSalehDarzi/BORG-Scheme/tree/main/BORG} {\texttt{github.com/TheSalehDarzi/BORG-Scheme/tree/main/BORG}}}}, along with its over-the-air 5G testbed implementation\footnote{\fbox{\href{https://github.com/TheSalehDarzi/BORG-Scheme/tree/main/OTA} {\texttt{github.com/TheSalehDarzi/BORG-Scheme/tree/main/OTA}}}}.

\vspace{-2mm}
\subsection{Outline} \label{subsec:outline}
The remainder of this paper is organized as follows. Section~\ref{sec:preliminaries} presents the system model, notation, and cryptographic building blocks underlying $\borg$. Section~\ref{sec:threatandsecuritymodels} formalizes the threat model, scope, and security definitions. Section~\ref{sec:feasibility} analyzes the feasibility of integrating NIST-PQC standards into 5G BS authentication, demonstrating the impracticality of direct adoption. Section~\ref{sec:solution} presents the $\borg$ framework, including the proposed scheme, security analysis, and full 5G protocol instantiation. 
Section~\ref{sec:PerformanceEvaluation} provides a comprehensive performance evaluation against PQC and conventional baselines on a real over-the-air 5G testbed. Section~\ref{sec:RelatedWork} surveys related work. Section~\ref{sec:conclusion} concludes the paper and outlines future directions.
\vspace{-3mm}
\section{Preliminaries and Building Blocks} \label{sec:preliminaries}
This section outlines the notation, network architecture, and the cryptographic building blocks. 

\noindent \textbf{Notations}: The symbol $||$ denotes concatenation, and $\cdot$ denotes multiplication. 
For two primes $p$ and $q$, let $\mathbb{Z}_q$ be the finite field of integers modulo $q$, and let $\mathbb{G}$ be a cyclic group of prime order $p$ with generator $g$. We define two cryptographically secure hash functions: $H_1: \{0,1\}^\ast \rightarrow \mathbb{Z}_q$ and $H_2: \{0,1\}^{*} \rightarrow \mathbb{Z}_q$. $x \xleftarrow{\$} \mathcal{S}$ indicates that $x$ is sampled uniformly at random from the set $\mathcal{S}$. Vectors are denoted by $\vec{x}$, and $\{x_i\}_{i=1}^{n} = \{x_1, x_2, \dots, x_n\}$ represents a set of $n$ elements. Finally, $\sk$, $\pk$, and $\id$ refer to the secret key, public key, and the identity of an entity (e.g., MAC address), respectively. A list of acronyms is provided in Appendix~\ref{subsec:acronmys}.

\subsection{System Model: 5G Cellular Network}
\label{subsec:5G} 
\subsubsection{Network Components}
The 5G network consists of three main entities~\cite{fourati2021comprehensive}: \vspace{-2mm}

\begin{itemize}[leftmargin=*]
    \item \textit{\textbf{5G Network Core:}} This entity serves as the central management of the cellular network, responsible for service delivery, session management, policy control, data handling, and security enforcement while integrating multiple network functions. One of the crucial components of the network core is the Access and Mobility Management Function (AMF), which is most relevant to our work.\vspace{-2mm}
    \item \textit{\textbf{User Equipment (UE):}} Located at the network edge, a UE refers to a cellular device (e.g., smartphone or IoT) subscribed to the network. Each UE is registered and equipped with a Universal Subscriber Identity Module (USIM) issued by the network authorities. It uses a unique identifier for communication, connection establishment, and access to network services.  \vspace{-2mm}
    \item \textit{\textbf{Radio Access Network (RAN):}} This network, comprising BSs (gNB) and UEs, manages radio transmissions, traffic, data exchange, and user service requests. Our work focuses on this component, where bootstrapping and system information messages are periodically broadcast. As these messages are neither encrypted nor signed, they are susceptible to adversarial manipulation. The UE initiates service requests procedures based on the content of these broadcasts and the type of service required.  \vspace{-3mm}  
\end{itemize}

\begin{figure}
	\centering
	\includegraphics[scale=0.4, trim = 0cm 18cm 0cm 0cm, clip]{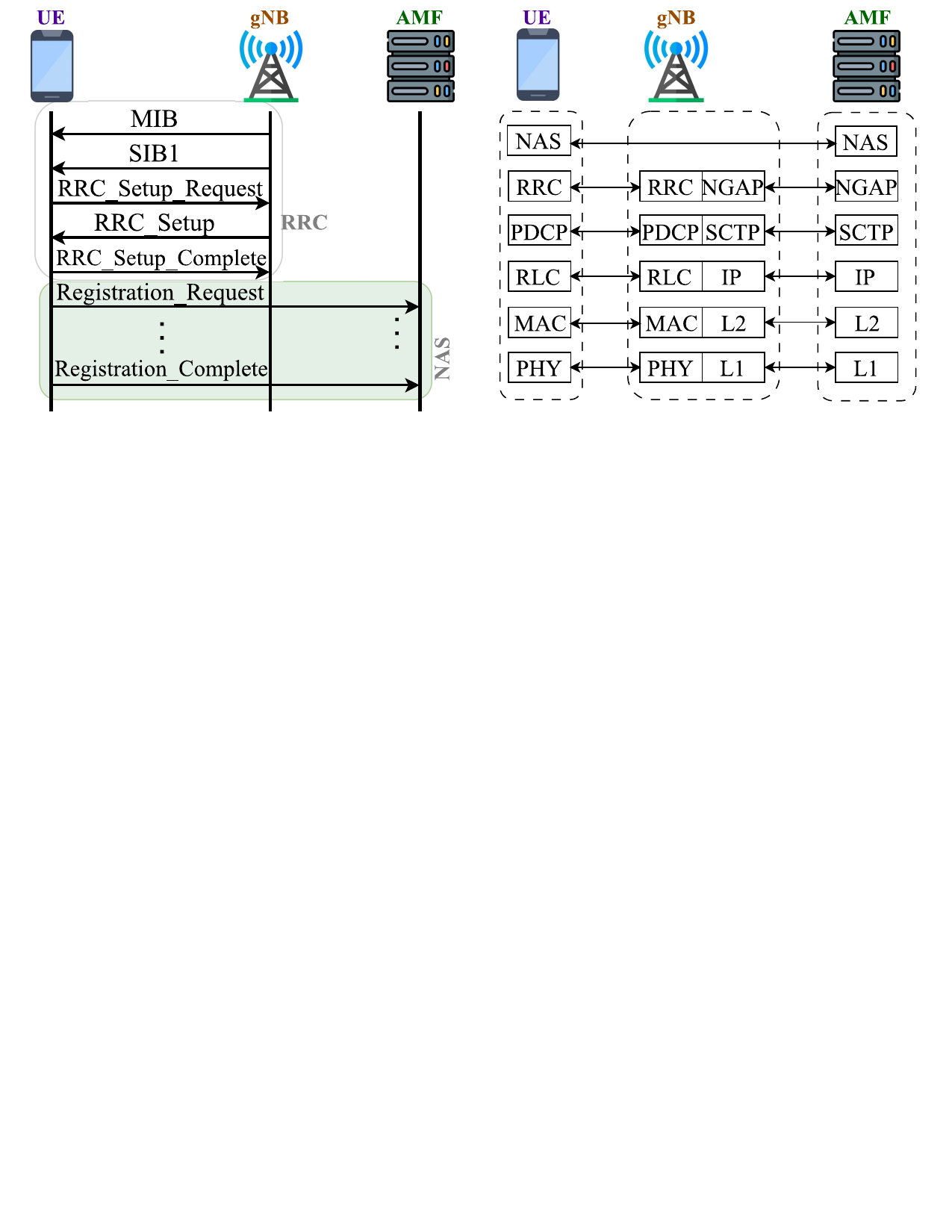}
	\caption{\small Initial 5G network connection setup and protocol stack.} \vspace{-6mm}
	\label{fig:timing}
\end{figure}

\subsubsection{Initial $\boldsymbol{\bs}$-$\boldsymbol{\ue}$ Communication} 
In 5G protocol stack (Figure~\ref{fig:timing}), the topmost layer in the BS is Radio Resource Control (RRC). For UE and AMF, the Non-Access Stratum (NAS) layer is stacked over the RRC layer. Only the master public key ($\pk_{\id_0}$) is securely embedded in the USIM and is assumed to be publicly verifiable. Private keys for AMF and BSs are derived from the master secret key ($\sk_{\id_0}$) and distributed through secure channels. 
For initial bootstrapping, the BS broadcasts the System Information (SI)—an RRC message—to the UE, announcing its configuration parameters. SI consists of the Master Information Block (MIB) and the System Information Block (SIB). The MIB, a short message, assists in decoding the first SIB: SIB1.  
Broadcast over the Downlink Shared Channel (DL-SCH), SIB1 contains scheduling and availability information for other SIB messages. As per 3GPP specifications~\cite{RRCSpec}, SIB1 has a maximum size of 372 bytes and is transmitted periodically every $160$~ms, with repeated broadcasts allowed.  
Fig.~\ref{fig:sib1} illustrates the structure of SIB1. Some fields are always present, while others are conditional or optional. The \textit{Cell Selection Info} field provides signal quality metrics, while \textit{Cell Access Related Info} includes Public Land Mobile Network (PLMN) identifiers and cell access status. Optional fields like \textit{IMS-Emergency Support} indicate support for emergency services in limited service mode. For detailed field descriptions, see~\cite{RRCSpec}. After receiving SIB1, the UE initiates the RRC setup. Upon successful RRC connection, the UE initiates NAS registration with the AMF.  
Several additional SIB messages (SIB2–SIB21) are transmitted over DL-SCH in periodic windows, each serving specific functions. For instance, SIB3 provides NR \textit{intra}-frequency neighbor cell lists and reselection areas, SIB4 conveys \textit{inter}-frequency equivalents, SIB9 delivers GPS and UTC time, and SIB15 carries disaster roaming configurations.

\begin{figure*}[ht!]
	\centering
	\includegraphics[scale=0.8, trim = 0cm 24cm 0cm 0cm, clip]{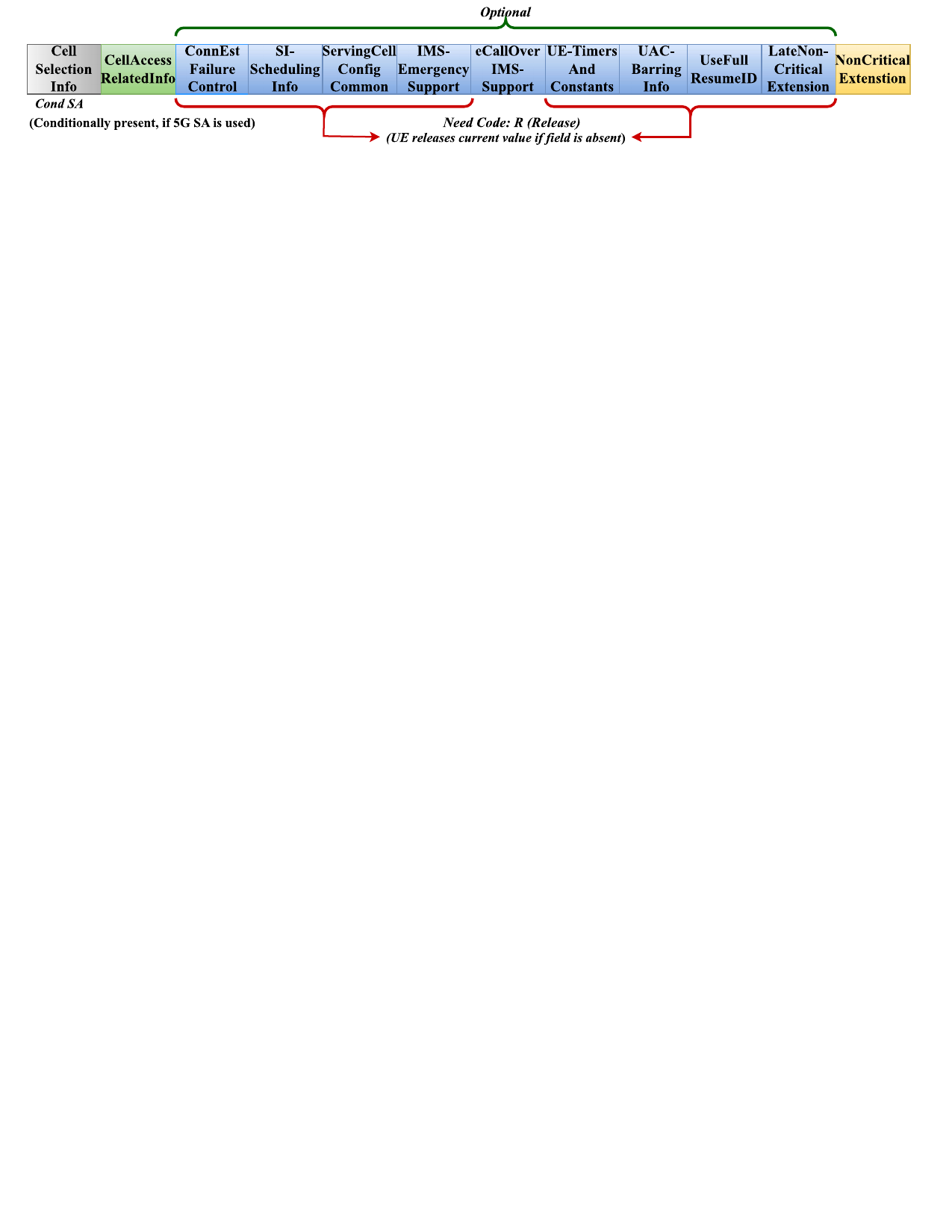} \vspace{-6mm}
	\caption{SIB1 message structure in 5G.}\vspace{-5mm}
	\label{fig:sib1}
\end{figure*}

\subsection{Building Blocks} \label{subsec:cryptoprimitives}

\looseness-1
\noindent\textbf{Hierarchical Identity-Based Threshold Signature Scheme with Fail-Stop property (HITFS).} 
Our proposed $\borg$ framework realizes a Hierarchical Identity-Based Threshold Signature scheme with Fail-Stop property (HITFS), which harnesses a Hierarchical IBS (HIBS)\cite{chow2004secure, singla2021look} and FROST~\cite{komlo2021frost}. In the HIBS model, keys are derived hierarchically, where each level’s keys are generated from its parent, binding identities directly to signing keys without the need for a trusted certificate authority. This structure supports efficient identity validation and key expiration verification using a compact master key~\cite{ramadan2020identity, chow2004secure, galindo2009schnorr}. By incorporating threshold cryptography, any $t$ out of $n$ authorized signers can collaboratively produce a valid signature without reconstructing the group’s secret key, while fewer than $t$ participants are cryptographically incapable of forging a signature. This design is particularly well-suited for distributed and fault-tolerant signing, as each participant only holds a share of the signing key. As long as the number of colluding parties remains below the threshold $t$, unauthorized signing remains infeasible~\cite{ergezer2020survey}. In addition, $\borg$ integrates an FS security mechanism~\cite{susilo1999fail, yaksetig2024extremely}, which leverages the second pre-image resistance of cryptographic hash functions to enable post-mortem forgery detection against quantum adversaries. While FS operates similarly to standard signatures under conventional security assumptions, it uniquely allows a signer to prove that a forgery has occurred if those assumptions are violated. This serves as a breach resiliency mechanism, halting further key usage and providing cryptographic evidence to absolve the signer of liability. 

\vspace{-1mm}
\begin{definition} \label{def:HIBS}
A hierarchical identity-based threshold signature scheme with fail-stop property is a 7-tuple algorithm as shown below: \vspace{-2mm}
    \begin{itemize}[leftmargin=*]
	\item[-] ${\underline{(\sk_{\id_0}, \pk_{\id_0}, params)\as \hitsec.\setup(1^\kappa)}}$:~Given~the security parameter $\kappa$, it outputs the master secret and public keys $(\sk_{\id_0}, \pk_{\id_0})$ and the system parameters, $params$, which is an implicit input to all the following algorithms.\vspace{-2mm}
	\item[-] ${\underline{(\{\sk_{\id_{k,i}}\}_{i=1}^{n}, {\vec{Q}}_{\id_k}) \as \hitsec.\keyextract({\vec{\id}}_k, {\vec{Q}}_{\id_{(k-1)}}}}$, ${\underline{\sk_{\id_{(k-1)}})}}$:~Given the identity vector at level $k$~${\vec{\id}}_k=$ $(\id_1,\id_2, \dots, \id_k)$, the algorithm extracts the secret key of $\id_k$ using the public key vector ${\vec{Q}}_{\id_{(k-1)}}$ and the secret key $\sk_{\id_{(k-1)}}$ from level $k-1$. It then outputs the secret key shares for each participant ($\{\sk_{\id_{k,i}}\}_{i=1}^{n}$) and computes the corresponding group public key values ${\vec{Q}}_{\id_k} = (Q_{\id_1}, Q_{\id_2}, \dots, Q_{\id_k})$.\vspace{-2mm}
    \item[-] $\underline{\mathcal{L}_i\as \hitsec.\preprocess(J)}$:~Given the predeterm- ined number of messages to be signed $J$, it returns the commitment values for all participants ${i\in [1,n]}$ in a list $\mathcal{L}_{i} \as (i, \{E_{i,j}\}_{j=1}^{J}, \{D_{i,j}\}_{j=1}^{J})$. \vspace{-2mm}
    \item[-] $\underline{\sigma_{k,j} \as \hitsec.\sign(m_j, \mathcal{L}_{i}, \{\sk_{k,i}\}_{i=1}^{\beta})}$:~Given a mes- sage $m_j$ with index $j$, commitmetn values $\mathcal{L}_i$, and $\beta \in [t,n]$ participating signers' secret keys ($\{\sk_{k,i}\}_{i=1}^{\beta}$), it returns a signature $\sigma_{k,j}$ for signers at level $k$. \vspace{-2mm}
    \item[-] $\underline{\{0, 1\} \as \hitsec.\mverify(m_j, {\vec{\id}}_k,{\vec{Q}}_{\id_k}, \sigma_{k,j})}$:~It re- turns $1$ if the signature $\sigma_{k,j}$ on message $m_j$ is valid with respect to the identity vector ${\vec{\id}}_k$ and public key vector ${\vec{Q}}_{\id_k}$, and $0$ otherwise. \vspace{-2mm}
    \item[-] $\underline{\pi \as \hitsec.\pof(\{\hat{e}_{i,\mathrm{j}}\}_{i=1}^{\beta}, \{\hat{d}_{i,\mathrm{j}}\}_{i=1}^{\beta}, m, \sigma_k', hist)}$:~Given the message-signature pair $(m, \sigma_k')$, random commitment values of the signing participants, and the history of previous signatures ($hist$), it outputs $\pi$, a proof of forgery if $\sigma_k'$ is forged; otherwise, it returns “\mbox{Not A Forgery}”. \vspace{-2mm}
    \item[-] $\underline{\{0, 1\} \as \hitsec.\fverify(\alpha_{k}, \sk_{k-1}, Q_{\id_k}, m, \sigma_k', \pi)}$: On the selected random input $\alpha_{k}$, public $Q_{\id_k}$, secret key $\sk_{k-1}$, message $m$, signature $\sigma_k'$, and $\pi$, it returns $1$ if the proof of forgery is valid, otherwise, $0$. 
	\end{itemize}
\end{definition}

\begin{definition} \label{def:DLP}
\looseness-1 Discrete Logarithm Problem (DLP)~\cite{johnson2001elliptic}: Let $\mathbb{G}$ be the finite cyclic group with generator $g$, given $g \in \mathbb{G}$, $h \in \mathbb{G}$, and $h = g^x \mod p$ with some unknown $x \in \mathbb{Z}_q$, the $(EC)DLP$ requires computing $x = \log_g h\mod p$. 
\end{definition}

\noindent \textbf{PQ Threshold Digital Signature Scheme}. We employ a PQ-secure threshold signature scheme ($\thpq$) for distributed audit logging, eliminating single points of failure. $\thpq$ is critical for audit logging and subsequent forgery detection in our future-proofed defense against quantum-capable adversaries. It distributes the audit key across multiple BSs, allowing any $t$ of them to jointly sign, while preventing forgery by up to $(t{-}1)$ compromised nodes~\cite{boschini2024ringtail}.  
The scheme consists of: {\em (i)} $\thpq.\keygen(1^\kappa, t, n)$: generates a global public key $\pk$ and a set of $n$ secret key shares $(\sk_1,\dots, \sk_n)$ from the security parameter $\kappa$ and threshold $t$ out of $n$; {\em (ii)} $\thpq.\sign(\sk_i, m)$: each signer $i$ for ${i=1,\dots,t}$, uses $\sk_i$ to produce a signature share $\sigma_i$ on the message $m$; {\em (iii)} $\sigma \as \thpq.\aggregate(\{\sigma_i\}_{i=1}^{t})$ aggregates $t$ signature shares into a valid signature $\sigma$. {\em (iv)} $\{0,1\}\as\thpq.\verify(\pk, m, \sigma)$ verifies $\sigma$ on message $m$ using $\pk$. For further details, see~\cite{boschini2024ringtail}.

\section{Threat and Security Models} \label{sec:threatandsecuritymodels}
This section outlines the threat model and the scope of our solution followed by the security model that underpin the formal security proof of $\borg$.

\subsection{Threat Model and Scope} \label{subsec:threatmodel} 
We consider a probabilistic polynomial-time (PPT) adversary with full control over the wireless medium. The adversary can eavesdrop on all broadcast messages, inject, modify, or replay forged $SIB$ messages, and impersonate legitimate base stations (gNBs) to mislead UEs. Additionally, the adversary may corrupt up to ${(t-1)}$ BSs, gaining access to their secret keys and internal states to craft forgeries. The adversary is thus capable of performing three attack vectors commonly exploited in cellular networks, as captured in our threat model and illustrated in Fig.~\ref{fig:adv_models}, and detailed below:\vspace{-2mm}


\begin{itemize}[leftmargin=*]
    \item \textbf{Fake Base Stations (FBSs).} These attacks~\cite{ltefuzz} are carried out by luring the victim UE to connect to an FBS that spoofs legitimate BSs. Once connected, attackers can launch multi-phase attacks that exploit vulnerabilities in subsequent protocol stages~\cite{mubasshir2025gottadetectemall}. \vspace{-2mm}
    
    \item \textbf{Key Compromise Scenarios.} We account for active adversaries capable of compromising BSs to extract signing keys~\cite{de2024performance}, forge signatures, and impersonate legitimate BSs during 5G bootstrapping~\cite{ransacked}. While some BSs may be compromised, we assume at least $t$ out of $n$ remain uncompromised. This is a practical assumption, as a majority compromise would indicate that the entire network is no longer trustworthy. To support this, BS hardening techniques such as advanced intrusion detection and secure configuration practices can be applied~\cite{erricson}. \vspace{-2mm}
    
    \item \looseness-1 \textbf{MiTM Attacker.} An MiTM attacker impersonates a BS to a victim UE and vice versa, 
    enabling interception, modification, or replay of messages. This is possible when traffic is not protected (e.g., digitally signed)~\cite{rupprecht2019breaking}. \vspace{-2mm}
    
\end{itemize}

Our threat model also captures provable detection of quantum-capable adversaries with the potential to break conventional signatures~\cite{mitchell2020impact}. While our scheme does not provide real-time PQ security, it enables post-mortem (PM) forgery detection via a fail-stop (FS) mechanism: computationally bounded signers (i.e., BSs) can identify and prove forgeries once the underlying assumptions are broken~\cite{pfitzmann1991fail, boschini2024s}. This FS mechanism halts the system upon security breaks, minimizing damage and further exploitation. Since forgery detection relies on the integrity of audit logs, we incorporate a distributed audit logging system secured with PQ threshold signatures to support post-mortem detection (see Section~\ref{subsubsec:5}).



\looseness-1 \noindent \textbf{Scope.} Our objective is to design an authentication framework for 5G UE-BS communication. Since SIB1 conveys critical RAN information and the broadcast schedule for subsequent SIB messages, its authentication ensures that devices receive legitimate access details and scheduling. We therefore identify SIB1 as the most essential message to protect, and implement $\borg$ primarily for its authentication. However, we identify that our mechanism is directly deployable for other SIB messages.  Our scope excludes authentication and key agreement procedures (e.g., 5G-AKA~\cite{rossi2024enhancing, damir2022beyond}), as $\borg$ targets the broadcast bootstrapping plane (SIB1 authenticity and BS legitimacy) that precedes NAS/AKA, while 5G-AKA provides UE–core mutual authentication and session key establishment. Hence, $\borg$ is orthogonal and complementary to AKA: it does not replace key agreement but ensures that the UE only initiates AKA with a verified BS.   
In practice, the Core Key Generator (Used in $\borg$) role can be realized as a logical function co-located with existing key-management entities, and the required master public key can be provisioned through standard USIM/eSIM updates. Also, $\borg$ can be integrated with the AKA procedure through a policy gate, ensuring that the AKA process is initiated only after a fresh $\borg$-verified SIB1 has been received. This linkage prevents fake BS–driven bootstrapping while preserving the cryptographic structure of AKA. Similarly, side-channel attacks, including physical key extraction, are also out of scope. Additionally, our framework does not address UE-to-BS privacy, denial of service, passive eavesdropping, jamming, overshadowing, or other physical-layer attacks, which require separate, orthogonal defenses on other layers of 5G such as physical-layer encryption or anti-jamming mechanisms.

\begin{figure*}
\centering
\begin{subfigure}{.25\textwidth}
  \centering
  \includegraphics[width=0.8\linewidth]{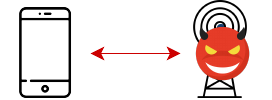}
  \caption{\small Fake Base Stations}
  \label{fig:sub1}
\end{subfigure}%
\begin{subfigure}{.25\textwidth}
  \centering
  \includegraphics[width=0.75\linewidth,trim = 0cm 19.5cm 0cm 0cm, clip]{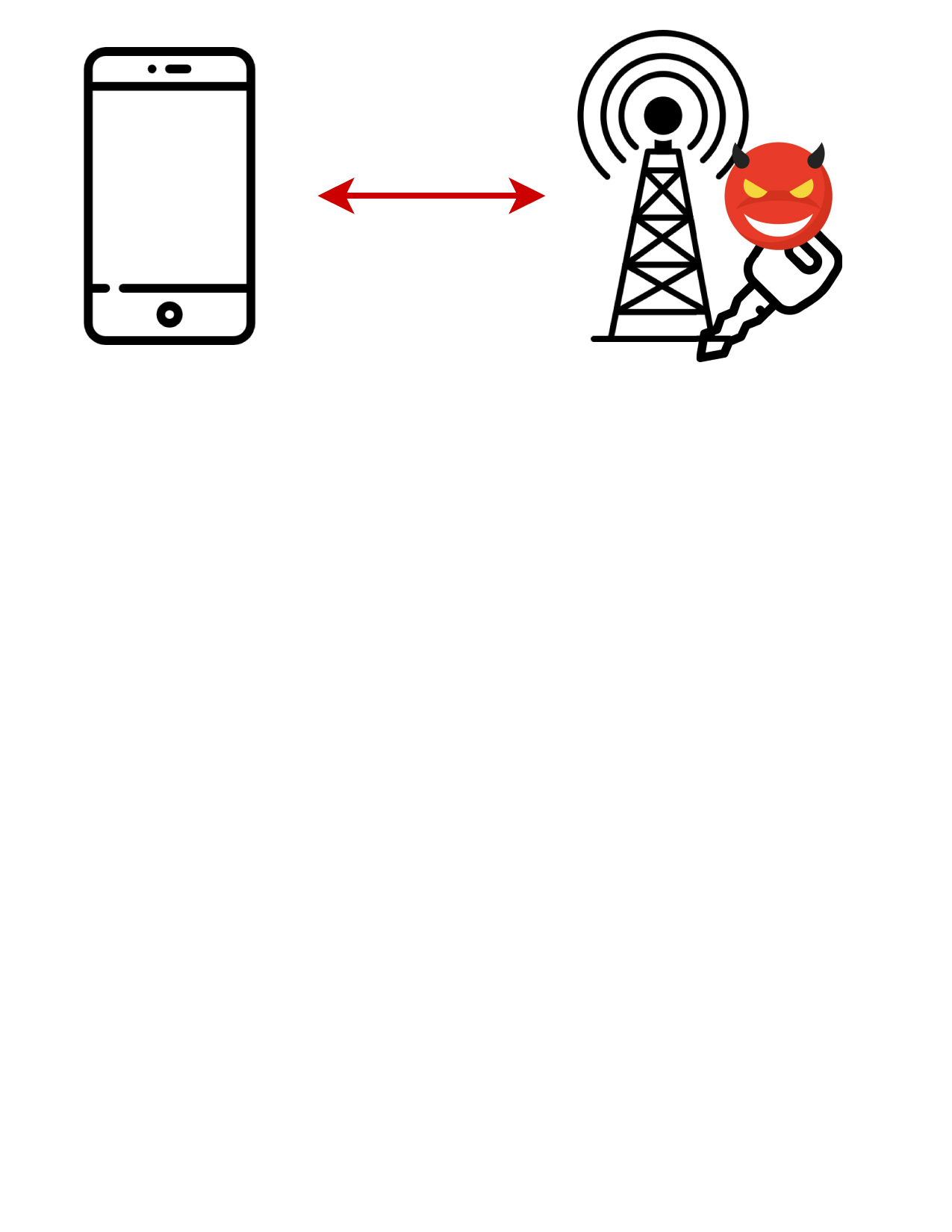}
  \caption{\small Compromised Key}
  \label{fig:sub4}
\end{subfigure}%
\begin{subfigure}{.25\textwidth}
  \centering
  \includegraphics[width=1\linewidth]{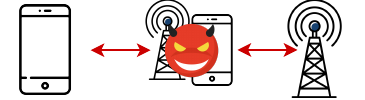}
  \caption{\small MiTM Attacker}
  \label{fig:sub3}
\end{subfigure}%
\begin{subfigure}{.25\textwidth}
  \centering
  \includegraphics[width=0.9\linewidth]{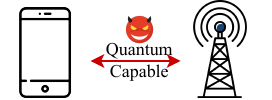}
  \caption{\small Quantum Adversary}
  \label{fig:sub4}
\end{subfigure}
\caption{Outline of Our Threat Models. 
}\vspace{-6mm}
\label{fig:adv_models}
\end{figure*}


\subsection{Security Model} \label{subsec:securitymodel} 
Following the hierarchical IBS~\cite{schnorr1990efficient, galindo2009schnorr} and Schnorr-based (threshold) signature security 
models~\cite{komlo2021frost, singla2021look}, we define the Existential Unforgeability under a selective-ID, adaptive Chosen Message-and-ID Attack (EUF-sID-CMIA) for a $\hitsec$ scheme through a game between the adversary \A and challenger $\mathcal{C}$. The adversary \A controls fewer than $t$ signing participants and has access to the following oracles:  
{\em (i)} Key Extraction Oracle $\mathcal{O}_{E}$: Given a user $\id$ at level $k$ with $n$ users, it returns the secret key shares $\{\sk_{ID_{k,i}}\}_{i=1}^{n}$.  
{\em (ii)} Preprocessing Oracle $\mathcal{O}_{P}$: On input signing round $j$ and user identity $\id$, it provides the commitment values for signing.  
{\em (iii)} Signing Oracle $\mathcal{O}_{S}$: Given a message $m$ and user $\id$, it executes the signing procedure and returns a valid signature $\sigma$.  
{\em (iv)} Random Oracles $\mathcal{O}_{H_1}$ and $\mathcal{O}_{H_2}$: Queries to hash functions $H_1$ and $H_2$ are modeled as interactions with a random oracle, an idealized black-box that returns truly random outputs for each unique query while maintaining consistency across repeated inputs.

\begin{definition} \label{def:Experiment1}
The EUF-sID-CMIA security experiment $\mathit{Expt}_{\hitsec}^{\eufsidcmia}$ for a $\hitsec$ signature scheme is defined as follows:\vspace{-1mm}
\begin{itemize}[leftmargin=*]
    \item[-] \C runs $\hitsec.\setup(1^\kappa)$ and returns the $\pk_{\id_0}$ and the public parameters to the adversary $\mathcal{A}$. \vspace{-2mm}
    \item[-] $(m^\ast, {\vec{\id}}^\ast_k, {\vec{Q}}_{\id^\ast_k}, \sigma^\ast) \as$ \A$^{\mathcal{O}_{\texttt{E}}, \mathcal{O}_{P}, \mathcal{O}_{S}}(\pk_{\id_0}, params)$ 
\end{itemize}
\end{definition}
\vspace{-1mm}
\A wins the experiment if the forged signature passes verification (${1 \as \mverify(m^\ast, {\vec{\id}}^\ast_k, {\vec{Q}}_{\id^\ast_k}, \sigma^\ast)}$) while satisfying the following conditions:  
{\em (i)} The target $\id^\ast$ or any of its prefixes was not queried to $\mathcal{O}_{E}$.  
{\em (ii)} The commitment values used for signing were not queried to the preprocessing oracle $\mathcal{O}_{P}$.  
{\em (iii)} The message-and-ID pair ${(m^\ast, {\vec{\id}}')}$, where ${\vec{\id}}'$ is a prefix of $\id^\ast$, was not queried to $\mathcal{O}_{S}$.  
{\em (iv)} The $\pof(.)$ or hash queries $H_1$ on the secret random value $\alpha_i$ (for $i \in \{0, k\}$) were not invoked during the security experiment. The forger's advantage in winning the game is defined as $Pr[\mathit{Expt}^{\eufsidcmia}_{\hitsec}$($\mathcal{A}$) $= 1]$.




Following the principles of fail-stop signature schemes \cite{pfitzmann1991fail, boschini2024s}, we formalize the security of a $\hitsec$ scheme through the following properties: {\em (i) Signer-Side Security}, which ensures that a quantum-capable adversary controlling fewer than $t$ signers cannot produce an undetectable forgery; {\em (ii) Verifier-Side Security}, which guarantees existential unforgeability under a selective-ID, adaptive chosen message-and-ID attacks ($\eufsidcmia$, Definition~\ref{def:Experiment1}); and {\em (iii) Non-Repudiation}, which prevents signers from falsely denying valid signatures. These properties are quantified by distinct security parameters: $\lambda_1$ for signer-side fail-stop security, $\lambda_2$ for non-repudiation, and $\kappa$ for verifier-side unforgeability~\cite{susilo1999fail, safavi2000threshold, chen2021fail}.

\begin{definition} \label{def:Experiment2}
A $\hitsec$ provides $\lambda_1$-bit signer-side fail-stop security if, for any quantum-capable adversary \A controlling fewer than $t$-out-of-$n$ signers, the following holds: \vspace{-1mm}
\begin{equation*}
\scalebox{0.85}{$
\Pr\left[
\begin{aligned}
   &1 \as \hitsec.\mverify(m^\ast, \vec{\id}_k, \vec{Q}_{\id_k}, \sigma^\ast_k)~\land \\
   &0 \as \hitsec.\pof(\{\hat{e}^\ast_{i,\mathrm{j}}\}_{i=1}^{t}, \{\hat{d}^\ast_{i,\mathrm{j}}\}_{i=1}^{t}, m^\ast, \sigma^\ast_k, hist)
\end{aligned}
\right] \leq \mathit{negl}(\lambda_1)
$}
\end{equation*}\vspace{-1mm}
where the forged signature passes the verification, and $\hitsec.\pof$ is the proof generated by at least one honest signer. This bound holds as long as \A has not queried the signing oracle on $m^\ast$ nor obtained the commitment values ($\{{\hat{e}^\ast_{i,\mathrm{j}}\}_{i=1}^{t}, \{\hat{d}^\ast_{i,\mathrm{j}}\}_{i=1}^{t}}$) from uncorrupted signers.
\end{definition}


\begin{definition} \label{def:Experiment3}
A $\hitsec$ scheme provides $\lambda_2$-bit non-repudiation FS security if, for any quantum-capable $\mathcal{A}$ controlling one-out-of-$t$ participating signers, the following holds: \vspace{-1mm}
\begin{equation*}
\scalebox{0.85}{$
\Pr\left[
\begin{aligned}
   &1 \as \hitsec.\mverify(m, \vec{\id}_k, \vec{Q}_{\id_k}, \sigma_k)~\land \\
   &\pi^\ast \as \hitsec.\pof(\{\hat{e}^\ast_{i,\mathrm{j}}\}_{i=1}^{t}, \{\hat{d}^\ast_{i,\mathrm{j}}\}_{i=1}^{t}, m, \sigma^\ast_k, hist)
\end{aligned}
\right] \leq \mathit{negl}(\lambda_2)
$}
\end{equation*} 
where $\sigma_k$ is generated according to the signing protocol, and $\pi^\ast$ is the forgery proof which is not equal to "$\mathit{Not~A~Forgery}$". This bound holds as long as \A has not queried the commitment values ($\{\hat{e}^\ast_{i,\mathrm{j}}\}_{i=1}^{t}, \{\hat{d}^\ast_{i,\mathrm{j}}\}_{i=1}^{t}$) of any uncorrupted signers. 
\end{definition}

\vspace{-3mm}
\section{Feasibility Analysis of NIST-PQC for Initial Bootstrapping in 5G Cellular Network}
\label{sec:feasibility}



\subsection{Challenges in Deploying NIST-PQC Signatures} 
For any BS authentication mechanism, it is critical that the BS signs the SIB1 message and, ideally, embeds the signature within the same packet to avoid additional overhead~\cite{hussain2019insecure}. 
However, including the signature in SIB1 is challenging due to strict size constraints. 
According to the 3GPP RRC specification~\cite{RRCSpec}, the maximum allowable size for the SIB1 message is $2976$ bits or $372$ bytes (section 5.2.1). However, for initial communication, not all the bytes of the SIB1 are used. For instance, a minimally configured SIB1 typically occupies $80$–$100$ bytes. 
To assess real-world usage and configuration, we take measurements for SIB1 messages from two major U.S. vendors across multiple BSs--covering $8$ different physical cells. The observed SIB1s had a minimum size of $108$ bytes and a maximum of $120$ bytes. 
These findings suggest that, under current commercial vendor deployments, compact signatures can be piggybacked without requiring newly introduced messages.  
In contrast, larger signatures would require further modifications to network operations (see Section~\ref{subsec:frag_hurdle}). 

\looseness-1 Thus, to avoid fragmentation, the signature must fit within the current packet size alongside its standard fields. Unfortunately, NIST-selected signatures produce signature sizes that exceed this limit and are incompatible with the $372$-byte maximum size of SIB1. For example, the $\textit{FN\mbox{-}DSA}$~\cite{soni2021falcon} yields a $1280$-byte signature and a $1793$-byte public key at NIST security level 5, and even at level 1, it requires $666$ bytes for the signature and $897$ bytes for the public key. Similarly, $\textit{SLH\mbox{-}DSA}$ \cite{cooper2024stateless} produces a signature size of nearly $17$~KB at level 3, making it highly impractical for 5G UE-to-BS communication. The $\textit{ML\mbox{-}DSA}$ signature (e.g., $\textit{Dilithium2}$~\cite{dang2024module}) produces a $2420$-byte signature and a $1312$-byte public key. These sizes far exceed the SIB1 limit, and the only way to transmit such signatures would be to fragment both the signature and public key across multiple SIB1 packets.

\subsection{Fragmentation Constraints}
\label{subsec:frag_hurdle}
\looseness-1
Specifically, to transmit the additional $3732$ bytes required by $\textit{Dilithium2}$ ($2420$-byte signature and $1312$-byte public key), and assuming $290$ bytes of free space per SIB1, the BS would need to send $13$ separate SIB1 packets, each containing a fragment of the signature-key pair. The UE must then extract these fragments and reconstruct the full signature and key for verification. Notably, this overhead accounts for only a single level in the certificate chain; transmitting longer chains would exacerbate the problem. The resulting communication and processing burden on both ends substantially increases the setup-to-authentication time, making this approach impractical for 5G environments.


\looseness-1
We have already established the communication overhead of broadcasting large keys or certificates by fragmenting them across multiple SIB1 packets. However, the latency implications further exacerbate this challenge. According to the RRC specification, ``\textit{The SIB1 is transmitted on the DL-SCH with a periodicity of $160$~ms and variable transmission repetition periodicity within 160~m. The default transmission repetition periodicity of SIB1 is $20$~ms but the actual transmission repetition periodicity is up to network implementation.}" (section 5.2.1~\cite{RRCSpec}).   
This implies a default delay of $20$~ms between consecutive SIB1 packet transmissions and maximum delay of $160$~ms, even when broadcasting identical network parameters with different signature fragments. Consequently, transmitting these fragments introduces a baseline latency ranging from $20 \times 12 = 240$~ms upto $160 \times 12 = 1920$~ms, not including the time for packet generation at the BS and reconstruction at the UE.    
Moreover, the delivery of SIB1 can be unreliable over the DL-SCH channel. Therefore, each packet must be assigned a sequence number for identification.   
Now, in this unreliable scenario, even with the newly introduced sequence number, we need to consider the situation when the UE receives out-of-order packets and must wait for all $13$ packets on its end before reconstructing the key or signature. In the worst case, the UE can receive the $2$nd packet first, followed by the $3$rd, and so on; until it receives the $1$st packet, followed by the $2$nd; up to the $13$th ($p_2, p_3, \dots, p_{13}, p_1, p_2, \dots p_{12}, p_{13}$). Thus, under a uniformly random packet delivery model, the expected number of packets needed for successful key/signature delivery rises to $13 + \lfloor 13/2 \rfloor = 19$, resulting in an overall transmission delay ranging from $20 \times 18 = 360$~ms upto $160 \times 18 = 2880$~ms in the DL-SCH, rendering this approach unsuitable for time-sensitive 5G authentication. 

\looseness-1 
To handle the fragmentation scheme with a potential out-of-order packet arrival scenario, one needs to consider a sliding window process to keep track of the valid in-sequence SIB1s. 
The sliding window process can sort the incoming packets according to their sequence numbers. When all the required packets (in our ML-DSA single-chain example, there will be $13$ packets in the best case and $25$ packets in the worst case) are received, the process merges the extracted fragments to reconstruct the key/signature. 
In addition to the above challenges, transmitting multiple SIB1 packets adds significant complexity to lower protocol layers. For example, in the srsRAN open-source 5G stack, even when multiple SIB1 messages are generated, the MAC layer permits only a single packet for scheduling at a time. Moreover, the protocol expects each BS to broadcast one SIB1 message, after which the UE initiates connection procedures. 
These limitations further underscore the impracticality of existing PQC solutions in the context of 5G BS authentication.


\subsection{Comparative Analysis on 5G Testbed}
\noindent \textbf{Over-the-air Testbed Setup.} To understand the applicability of various cryptographic solutions for 5G UE authentication, we set up a Software Defined Radio (SDR)-based testbed and run the algorithms for over-the-air communication. We use srsRAN and open5GS open-source implementations for this purpose. More specifically, we run srsUE on a USRP $B210$ and srsgNB on another USRP $B210$--both connected to the same computer through USB $3.0$ ports. We also use a Leo Bodnar GPSDO to ensure a seamless $10~MHz$ external clock for the USRP devices. The attached srsgNB to the open5GS core gets connected to the srsUE through over-the-air communication. Since the baseband of commercial UEs cannot be modified, we adopt a best-effort approach and conduct our analysis using a widely used srsRAN-Open5GS testbed. Our setup is consistent with existing works on 4G/5G bootstrapping~\cite{ross2024fixing, hussain2019insecure}. Fig.~\ref{fig:5gtestbed} shows our complete testbed setup.

\begin{figure}
	\centering
	\includegraphics[scale=0.04]{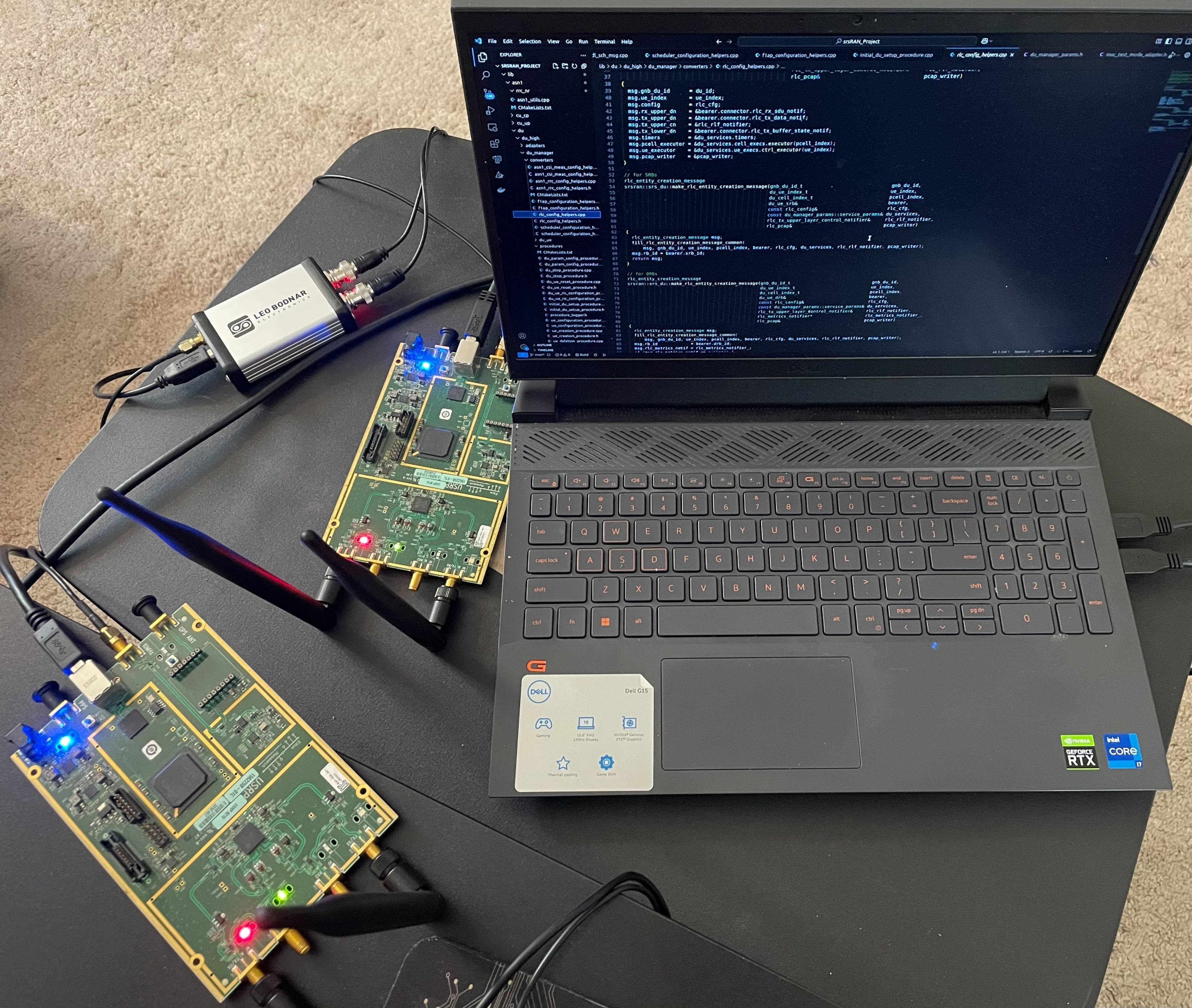}
	\caption{\small Testbed setup for 5G end-to-end communication. The gNB (USRP: bottom-left) and UE (USRP: top-right) are connected to a laptop and GPSDO.} \vspace{-5mm}
	\label{fig:5gtestbed}
\end{figure}

We evaluate existing schemes alongside our solution, $\borg$, in an over-the-air setup, measuring both cryptographic and network-induced delays. For $\textit{ML\mbox{-}DSA}$ with two certificate chains where fragmentation becomes essential, authentication requires $33$ additional packets, introducing roughly $12~\textit{KB}$ of overhead. In contrast, $\borg$ requires no additional packets, making it highly compatible with current 5G protocol constraints. Also, $\textit{ML\mbox{-}DSA}$ incurs a latency of approximately ranging from $662.47$~ms upto $5282.47$~ms, which is nearly ${221\sim1767\times}$ higher than that of $\borg$ (see TABLE~\ref{tab:PerformanceComparison3}), rendering it impractical. While schemes like $\textit{EC\mbox{-}Schnorr}$ avoid fragmentation, they lack PQ forgery detection and compromise resilience. This puts $\borg$ in a suitable spot of compatibility both from a PQ and 5G network perspective. Further performance details are provided in Section~\ref{sec:PerformanceEvaluation}. \vspace{-2mm}

\section{The Proposed Future-Proof Solution} 
\label{sec:solution}
 \noindent We begin with an overview of $\borg$, followed by algorithm descriptions and protocol instantiation for 5G networks.

\subsection{The Proposed $\borg$ Scheme} \label{subsec:scheme}
Given the infeasibility of current NIST-PQC standards, $\borg$ focuses on conventional secure techniques enhanced with key features: threshold signing for distributed trust, IBS to lift certificate burdens, and fail-stop mechanisms with PQ threshold audit logging for post-mortem PQ forgery detection. We present $\borg$ in Algorithms \ref{Alg:HITFSS1}-\ref{Alg:HITFSS4}.

\begin{algorithm}[ht!]
	\small
	\caption{$\borg$ \textit{(Setup and Key Extraction)}}\label{Alg:HITFSS1}
	\hspace{5pt}
\begin{algorithmic}[1]
\Statex \vspace{-1mm}\hspace{-5mm}$\underline{(\sk_{\id_0}, \pk_{\id_0}, params) \as \borg.\setup (1^\kappa)}$: CKG runs this algorithm once to set up the system. 

\State $\alpha_0 \asrand \mathbb{Z}_q$, ${\sk_{\id_0} \as H_1(\alpha_0)}$, and ${\pk_{\id_0} \as g^{\sk_{\id_0}} \mod p}$

\State \textbf{return}~${\sk_{\id_0}}$, ${\pk_{\id_0}}$, and ${params \as \{p, q, H_1, H_2\}}$\vspace{+1mm}
\end{algorithmic}
\begin{algorithmic}[1]
\Statex \hspace{-5mm}$\underline{(\{\sk_{\id_{k,i}}\}_{i=1}^{n}, \{\pk_{\id_{k,i}}\}_{i=1}^{n}, \vec{Q}_{\id_k}) \as \borg.\keyextract(\id_k}$, $\underline{\vec{Q}_{\id_{k-1}},\sk_{\id_{k-1}})}$: This algorithm is run by the user at level $(k-1)$ to generate key pairs for users at level $k$. 

\State ${\alpha_k \asrand \zq}$, ${r_k \as H_1(\alpha_k)}$, and ${Q_{\id_k} \as g^{r_k} \mod p}$

\State ${\vec{Q}_{\id_k} \as (\vec{Q}_{\id_{k-1}}, Q_{\id_k})}$ and ${h_{\id_k} \as H_1(\id_k || \vec{Q}_{\id_k})}$

\State ${\sk_{\id_k} \as sk_{\id_{k-1}} \cdot h_{\id_k} + r_k \mod q}$

\State ${f(x) = \sk_{\id_k} + \sum\limits_{i=1}^{t-1}a_i \cdot x^i \mod q}$, where ${a_i\asrand\zq}$, $i\in\{1,...,{t-1}\}$
\State \textbf{for}~{$i = 1, 2, \dots, n$}~\textbf{do}
\State ~~~~${\sk_{\id_{k,i}} \as f(i)}$ and ${\pk_{\id_{k,i}} \as g^{\sk_{\id_{k,i}}} \mod p}$ 
\State ~~~~\textbf{return}~$(\{\sk_{\id_{k,i}}\}_{i=1}^{n}, \{\pk_{\id_{k,i}}\}_{i=1}^{n}, \vec{Q}_{\id_k})$
\end{algorithmic}
\end{algorithm} 
\setlength{\textfloatsep}{0pt}

$\borg.\setup$ (Algorithm \ref{Alg:HITFSS1}) initializes the system by generating the master secret and public keys $(\sk_{\id_0}, \pk_{\id_0})$, and publishing the public parameters $params$.  
${\borg.\keyextract}$ (Algorithm \ref{Alg:HITFSS1}) enables hierarch-ical key extraction, allowing each level to derive key pairs for the subsequent level. Specifically, users at level ${k-1}$ utilize their secret key $\sk_{\id_{k-1}}$, the public key vector ${\vec{Q}_{\id_{k-1}} = \{\pk_{\id_{0}}, Q_{\id_1},\dots,}$ ${Q_{\id_{k-1}}\}}$, and the group identity $\id_k$ of the lower level to compute the group verification key $Q_{\id_k}$ and derive individual key pairs $(\sk_{\id_{k,i}}, \pk_{\id_{k,i}})$ for each user ${i = 1, \dots, n}$ at level $k$. Secret keys are then securely distributed to each user. The derived secret key follows a Schnorr signature structure and can be reconstructed via Lagrange interpolation \cite{shamir1979share} from any $t$-out-of-$n$ users at level $k$: $\sk_{\id_{k}} = \sum_{i=1}^{t} \lambda_i\cdot\sk_{\id_{k,i}}~mod~q$.




\begin{algorithm}[ht!]
	\small
	\caption{$\borg$ \textit{(Preprocessing)}}\label{Alg:HITFSS2}
	\hspace{5pt}
\begin{algorithmic}[1]
\Statex \vspace{-1mm}\hspace{-5mm}$\underline{\mathcal{L}_{i} \as \borg.\preprocess(J)}$: $n$ users at level $k$ execute this algorithm to generate commitment values, enabling them to sign up to $J$ messages.
\State \textbf{for}~{$i = 1, 2, \dots, n$}~\textbf{do}
\State ~~~~\textbf{for}~{$j = 1, 2, \dots, J$}~\textbf{do}
\State ~~~~~~~~~${(\hat{e}_{i,j}, \hat{d}_{i,j}) \asrand \zq \times \zq}$ 
\State ~~~~~~~~~${e_{i,j} \as H_1(\hat{e}_{i,j} || j || \id_{k,i})}$, ${d_{i,j} \as H_1(\hat{d}_{i,j} || j || \id_{k,i})}$
\State ~~~~~~~~~${E_{i,j} \as g^{e_{i,j}} \mod p}$ and ${D_{i,j} \as g^{d_{i,j}} \mod p}$
\State ~~~~Send ${(\{E_{i,j}\}_{j=1}^{J}, \{D_{i,j}\}_{j=1}^{J})}$ to the other ${n-1}$ users.
\State \textbf{for}~{$i = 1, 2, \dots, n$}~\textbf{do}
\State ~~~~\textbf{if}~{$(\{E_{i,j}\}_{j=1}^{J}, \{D_{i,j}\}_{j=1}^{J}) \in \mathbb{G}$}~\textbf{then}
\State ~~~~~~~~~${\mathcal{L}_{i,j} \as (E_{i,j}, D_{i,j})}$
\State ~~~~\textbf{return}~${\mathcal{L}_{i} \as (i, \{E_{i,j}\}_{j=1}^{J}, \{D_{i,j}\}_{j=1}^{J})}$
\EndFor
\end{algorithmic}
\end{algorithm} 
\setlength{\textfloatsep}{0pt}

\looseness-1 
Akin to Schnorr-style threshold signatures (i.e., \cite{komlo2021frost}), the signing follows a two-round protocol with three phases: (i) preprocessing: to generate a shared list of random commitments values, (ii) threshold signing: allows participants to compute their signature shares, and (iii) aggregation: to combine these shares into a group signature. 
The $\borg.\preprocess$ phase (Algorithm \ref{Alg:HITFSS2}) is jointly executed by all $n$ users at level $k$ to generate random commitment values required for signing up to $J$ messages. Each user ${i \in \{1, \dots, n\}}$ computes commitment pairs $(E_{i,j}, D_{i,j})$ for $j=1$ to $J$, and shares them with the others. After verifying their validity, users append the values to the commitment list $\mathcal{L}_{i,j}$, resulting in a consistent finalized list $\mathcal{L}_i$ across all participants. In practice, $\mathcal{L}_i$ may be published at a predefined location (e.g., a public bulletin) accessible to all users. 

In $\borg.\sign$ (Algorithm \ref{Alg:HITFSS3}), each signer uses its secret key share $\sk_{\id_{k,i}}$ and the commitment list $\mathcal{L}_{i}$ to compute a signature share for message $m$ at index ${j \in \{1,\dots, J\}}$, and submits it to the other $\beta$ signing participants. The signer set size $\beta$ is predetermined (${t \leq \beta \leq n}$) and must meet the threshold $t$ to produce a valid group signature. Upon receiving ${\beta-1}$ shares, each signer verifies them individually (aborting on failure) and aggregates the valid shares into the group signature $\sigma_{k,j} = (R_j, z_j)$.  
%
The $\borg.\mverify$ (Algorithm \ref{Alg:HITFSS3}) follows the standard Schnorr signature verification process. Using the vector of public keys ${\vec{Q}}_{\id_k}$ and identities $\vec{\id_k}$, the verifier validates the signature $\sigma_{k,j}$ on the message $m_j$. The correctness of the verification algorithm resembles Schnorr-based signature verification~\cite{singla2021look} and is given in Section~\ref{sec:verificationcorrectness}.

\begin{algorithm}[ht!]
	\small
	\caption{$\borg$ \textit{(Message Signing and Verification)}}\label{Alg:HITFSS3}
	\hspace{5pt}
\begin{algorithmic}[1]
\Statex \vspace{-1mm}\hspace{-5mm}$\underline{\sigma_{k,j} \as \borg.\sign(m_j, \mathcal{L}_{i}, \{\sk_{k,i}\}_{i=1}^{\beta})}$: At level $k$, $\beta$ participants ($\beta \in [t,n]$) execute this algorithm: 
\State \textbf{for}~{$i = 1, 2, \dots, \beta$}~\textbf{do}
\State ~~~~$\rho_{i,j} \as H_1(i || m_j || \{\mathcal{L}_{i,j}\}_{i=1}^{\beta})$ 
\State ~~~~$R_{i,j} \as D_{i,j} \cdot (E_{i,j})^{\rho_{i,j}} \mod p$
\State ~~~~${R_j \as \prod\limits_{i=1}^{\beta}R_{i,j} \mod p}$ and $h_j \as H_2(R_j || Q_{\id_k} || m_j)$
\State ~~~~$z_{i,j} \as d_{i,j} + e_{i,j} \cdot \rho_{i,j} + \lambda_i \cdot \sk_{\id_{k,i}} \cdot h_j \mod q$ 
\State Send $\{z_{i,j}\}_{i=1}^{\beta}$ to ${\beta-1}$ participants.
\Statex Each participant $i$ performs:
\State \textbf{for}~{$i = 1, 2, \dots, \beta$}~\textbf{do}
\State ~~~~${\rho_{i,j} \as H_1(i || m_j || \{\mathcal{L}_{i,j}\}_{i=1}^{\beta})}$ 
\State ~~~~${h_j \as H_2(R_j || Q_{\id_k} || m_j)}$
\State ~~~~\textbf{if}~{${g^{z_{i,j}} \neq R_{i,j} \cdot \pk_i^{\lambda_i \cdot h_j}\mod p}$}~\textbf{then} 
\State ~~~~~~~~\textbf{return} $\perp$. 
\State ~~~~\textbf{else}~${z_j \as \sum\limits_{i=1}^{\beta}z_{i,j}}\mod q$ and ${R_j \as \prod\limits_{i=1}^{\beta}R_{i,j}}\mod p$ 
\State ~~~~~~~~~\textbf{return}~${\sigma_{k,j} \as (R_j, z_j)}$
\end{algorithmic}
%
%
%
%
\begin{algorithmic}[1]
\Statex \hspace{-5mm} $\underline{\{0, 1\} \as \borg.\mverify(m_j, {\vec{\id}}_k, {\vec{Q}}_{\id_k}, \sigma_{k,j})}$: This algorithm is executable by any user within the network.
\State $h_{\id_\ell} \as H_1(\id_\ell || \vec{Q}_{\id_\ell})$ for $\ell = 1, 2, \dots, k$
\State $Q \as \prod\limits_{\ell=1}^{k-1} (Q_{\id_\ell})^{\prod\limits_{\omega=\ell+1}^{k}h_{\id_\omega}}$
\State $h_j \as H_2(R_j || Q_{\id_k} || m_j)$ 
\State \textbf{if}~{$g^{z_j} \stackrel{\mbox{?}}{=} R_j \cdot (Q \cdot Q_{\id_k} \cdot (\pk_{\id_0})^{\prod\limits_{\ell=1}^{k}h_{\id_\ell}})^{h_j} \mod p$}~\textbf{then}, \textbf{return}~$1$ 
\end{algorithmic}
\end{algorithm} 
\setlength{\textfloatsep}{0pt}

\begin{algorithm}[ht!]
	\small
	\caption{$\borg$ \textit{(Forgery Detection and Verification)}}\label{Alg:HITFSS4}
	\hspace{5pt}
\begin{algorithmic}[1] 
\Statex \hspace{-5mm} $\underline{\pi \as \borg.\pof(\{\hat{e}_{i,\mathrm{j}}\}_{i=1}^{\beta}, \{\hat{d}_{i,\mathrm{j}}\}_{i=1}^{\beta}, m_\mathrm{j}, \sigma_k', hist)}$: This algorithm is run by the $\beta$ signing participants at level $k$ to prove the forgery of a signature $\sigma_k'$ to entities at level ${k-1}$: 
\State $\mathrm{j} \as hist$ and $(R_\mathrm{j}', z_\mathrm{j}') \as \sigma_k'$
\State \textbf{for}~{$i = 1, 2, \dots, \beta$}~\textbf{do}
\State ~~~~Reveal $(\hat{e}_{i,\mathrm{j}}, \hat{d}_{i,\mathrm{j}})$ to other ${\beta-1}$ participants 
\Statex Each participant $i$ performs:
\State \textbf{for}~{$i = 1, 2, \dots, \beta$}~\textbf{do}
\State ~~~~$e_{i,\mathrm{j}} \as H_1(\hat{e}_{i,\mathrm{j}} || \mathrm{j} || m_\mathrm{j})$ and $d_{i,\mathrm{j}} \as H_1(\hat{d}_{i,\mathrm{j}} || \mathrm{j} || m_\mathrm{j})$ 
\State ~~~~$E_{i,\mathrm{j}} \as g^{e_{i,\mathrm{j}}} \mod p$ and $D_{i,\mathrm{j}} \as g^{d_{i,\mathrm{j}}} \mod p$
\State ~~~~$\rho_{i,\mathrm{j}} \as H_1(i || m_\mathrm{j} || \{\mathcal{L}_{i,\mathrm{j}}\}_{i=1}^{\beta})$
\State $R_\mathrm{j} \as \prod\limits_{i=1}^{\beta} D_{i,\mathrm{j}} \cdot (E_{i,\mathrm{j}})^{\rho_{i,\mathrm{j}}} \mod p$
\State \textbf{if}~{$R_\mathrm{j}' = R_\mathrm{j}$}~\textbf{then}
\State ~~~~\textbf{return}~$\pi \as$ ``\textit{Not A Forgery}"
\State \textbf{if}~{$R_\mathrm{j}' \neq R_\mathrm{j}$}~\textbf{then}
\State ~~~~\textbf{return}~$\pi \as (\mathrm{j}, \{\hat{e}_{i,\mathrm{j}}\}_{i=1}^{\beta}, \{\hat{d}_{i,\mathrm{j}}\}_{i=1}^{\beta})$
\end{algorithmic}
%
%
%
%
\begin{algorithmic}[1]
\Statex \hspace{-5mm}$\underline{\{0, 1\} \as \borg.\fverify(\alpha_{k}, \sk_{\id_{k-1}}, {\vec{Q}}_{\id_k}, m, \sigma_k', \pi)}$: This algorithm is run by the level $k-1$ to verify the proof of forgery. 
\State \textbf{if}~{$\pi' = ``\mbox{\textit{Not A Forgery}}"$}~\textbf{then}
\State ~~~~\textbf{return}~$\perp$
\State \textbf{if}~{$\pi' = (\mathrm{j}, \{\hat{e}_{i,\mathrm{j}}\}_{i=1}^{\beta}, \{\hat{d}_{i,\mathrm{j}}\}_{i=1}^{\beta})$}~\textbf{then}
\State ~~~~$r_{k} \as H_1(\alpha_{k})$ and $Q_{\id_{k}} \as g^{r_{k}} \mod p$
\State ~~~~$h_{\id_k} \as H_1(\id_k || {\vec{Q}}_{\id_k})$
\State ~~~~\textbf{for}~{$i = 1, 2, \dots, \beta$}~\textbf{do}
\State ~~~~~~~~~\textbf{if}~{$\prod\limits_{i=1}^{\beta}\pk_{\id_{k,i}}^{\lambda_i} \neq (g^{h_{\id_{k}} \cdot \sk_{\id_{k-1}}}) \cdot Q_{\id_{k}}$}~\textbf{then}
\State ~~~~~~~~~~~~~~~~\textbf{return}~$\perp$
\State ~~~~~~~~~\textbf{else} $(R_\mathrm{j}', z_\mathrm{j}') \as \sigma_k'$
\State ~~~~\textbf{for}~{$i = 1, 2, \dots, \beta$}~\textbf{do}
\State ~~~~~~~~~$e_{i,\mathrm{j}} \as H_1(\hat{e}_{i,\mathrm{j}} || \mathrm{j} || m_\mathrm{j})$ and $d_{i,\mathrm{j}} \as H_1(\hat{d}_{i,\mathrm{j}} || \mathrm{j} || m_\mathrm{j})$ 
\State ~~~~~~~~~$E_{i,\mathrm{j}} \as g^{e_{i,\mathrm{j}}} \mod p$ and $D_{i,\mathrm{j}} \as g^{d_{i,\mathrm{j}}} \mod p$
\State ~~~~~~~~~$\rho_{i,\mathrm{j}} \as H_1(i || m_\mathrm{j} || \{\mathcal{L}_{i,\mathrm{j}}\}_{i=1}^{\beta})$
\State ~~~~$R_\mathrm{j} \as \prod\limits_{i=1}^{\beta} D_{i,\mathrm{j}} \cdot (E_{i,\mathrm{j}})^{\rho_{i,\mathrm{j}}} \mod p$
\State ~~~~\textbf{if}~{$R_\mathrm{j}' = R_\mathrm{j}$}~\textbf{then}, \textbf{return}~$0$ 
\State ~~~~\textbf{else}, \textbf{return}~$1$
\end{algorithmic}
\end{algorithm}
\setlength{\textfloatsep}{0pt}

Multiple valid signatures may satisfy the verification algorithm; however, even if a capable adversary obtains such signatures, they cannot distinguish those genuinely generated by authorized signers \cite{susilo1999fail}. Leveraging this principle and following fail-stop mechanisms (e.g., \cite{susilo1999fail, yaksetig2024extremely}), signers at level $k$ will claim forgery by invoking the forgery proof algorithm $\borg.\pof$, while higher-level authorities at level ${k-1}$ run $\borg.\fverify$ to validate the claim.

$\borg.\pof$ (Algorithm \ref{Alg:HITFSS4}) is invoked by signers at level $k$ upon detecting a suspected forgery of message $m_j$ (e.g., via signature audit logs). Using the signature history ($hist$) to identify the message index $\mathrm{j}$, the $\beta$ participating signers reveal their secret nonces $(\{\hat{e}_{i,\mathrm{j}}\}_{i=1}^{\beta}, \{\hat{d}_{i,\mathrm{j}}\}_{i=1}^{\beta})$ originally generated during preprocessing. Each signer reconstructs the signature component $R_\mathrm{j}$ as in the signing process and compares it with the corresponding component in the suspected signature $\sigma_k'$. If they match, the signer outputs $\pi$ as "\textit{Not a Forgery}"; otherwise, it outputs the proof $\pi = (\mathrm{j}, \{\hat{e}_{i,\mathrm{j}}\}_{i=1}^{\beta}, \{\hat{d}_{i,\mathrm{j}}\}_{i=1}^{\beta})$, which is submitted to the higher-level authority at level ${k-1}$.  
$\borg.\fverify$ (Algorithm \ref{Alg:HITFSS4}) is executed by the level ${k-1}$ authority. Using the secret value $\alpha_k$ chosen during key extraction, the verifier first checks the validity of the public and group verification keys. If this fails, the proof is rejected. Otherwise, it reconstructs $R_\mathrm{j}$ from the disclosed nonces in $\pi$ and compares it with $R_\mathrm{j}'$ from the suspected forged signature. A match results in rejection of the forgery claim; a mismatch confirms forgery, prompting system halt due to a detected security breach.

\subsection{Security Analysis} 
\label{subsec:securityanalysis}
Following Definition~\ref{def:Experiment1}, we prove $\mathit{EUF}\mbox{-}\mathit{sID}\mbox{-}\mathit{CMIA}$ security of $\borg$ under the hardness of the (Elliptic Curve)-DLP based on the generalized forking lemma in the random oracle model \cite{bellare2006multi, boldyreva2012secure}. Additionally, based on Definition~\ref{def:Experiment2}, we prove the signer-side fail-stop security of $\borg$ under the hardness of second preimage resistance of a cryptographically secure hash function.

\vspace{-1mm}
\begin{theorem} \label{the:verifier-side} 
\textit{If an adversary \A can $(q_{E}, q_P, q_S, q_{H_1,H_2})$-break $\borg$ in the random oracle model (Definition \ref{def:Experiment1}) with an advantage $\epsilon$ in time $\tau$ while having access to at most ${(t\mbox{--}1)}$-out-of-$n$ signing participants, where $q_E, q_P, q_S, q_{H_1}$, and $q_{H_2}$ denote queries to key extraction, preprocessing, signing, and hash functions $H_1$ and $H_2$, then an algorithm \C can be constructed to break the (EC)DLP in group $\mathbb{G}$.} 
\end{theorem}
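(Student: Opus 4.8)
\emph{Proof strategy.} The plan is to construct a reduction \C that, on input a \dl instance $(g, h = g^x)$ in $\mathbb{G}$, runs the forger \A as a black box and outputs $x$, contradicting Definition~\ref{def:DLP}. Because the notion is \emph{selective}-ID, \A first announces its target identity vector ${\vec{\id}}^\ast_k = (\id_1,\dots,\id_k)$ before \C publishes $(\pk_{\id_0}, params)$. This lets \C generate the whole hierarchy honestly --- picking $\alpha_0$ and all prefix seeds itself, so it knows $\sk_{\id_0}$ and every $\sk_{\id_{k-1}}$ along the path --- while embedding the challenge only in the last-level group key, $Q_{\id^\ast_k} = h \cdot g^{c}$ for a random known $c$. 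Since $r_k = H_1(\alpha_k)$ is never exposed (winning condition (iv) forbids \A from querying $H_1$ on any $\alpha_i$), this substitution is perfectly indistinguishable from an honest extraction, and the quantity checked in \mverify becomes $Y = Q \cdot Q_{\id^\ast_k}\cdot(\pk_{\id_0})^{\prod_\ell h_{\id_\ell}} = g^{a} h$, where $a$ is known to \C and the exponent of $h$ is $1 \neq 0$.

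Next I would handle the oracle simulation. Queries to $H_1, H_2$ are answered by lazy sampling with consistent tables. For $\mathcal{O}_E$, every queried identity is (by condition (i)) neither ${\vec{\id}}^\ast_k$ nor a prefix thereof, so it diverges from the target path at some level; since \C knows $\sk_{\id_0}$ and all honestly chosen seeds, it simply runs \keyextract to produce the full secret key and all $n$ Shamir shares. For $\mathcal{O}_P$, \C publishes the honest parties' commitment pairs $(E_{i,j}, D_{i,j})$. For $\mathcal{O}_S$ on the target path (allowed on messages $m \neq m^\ast$ by condition (iii)), \C cannot use the honest signers' shares --- any $\leq t{-}1$ shares known to \C leave $\sk_{\id^\ast_k}=f(0)$ information-theoretically undetermined, precisely because it is tied to the unknown $x$. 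Instead \C invokes the honest-verifier zero-knowledge simulator at the aggregate level: it samples $(R_j, z_j)$, programs $h_j = H_2(R_j\,||\,Q_{\id^\ast_k}\,||\,m)$ so that $g^{z_j} = R_j\,Y^{h_j}$ holds, and then distributes $z_j$ into individually verifying honest shares by reprogramming the binding-factor oracle $H_1$ on the inputs $(i\,||\,m\,||\,\{\mathcal{L}_{i,j}\})$, aborting on the negligible event that any programmed point was previously set.

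Then comes the forking and extraction. When \A halts with a valid forgery $(m^\ast, {\vec{\id}}^\ast_k, {\vec{Q}}_{\id^\ast_k}, \sigma^\ast)$, $\sigma^\ast=(R^\ast, z^\ast)$, satisfying conditions (i)--(iv), the freshness of $m^\ast$ and of the commitments guarantees that $h^\ast = H_2(R^\ast\,||\,Q_{\id^\ast_k}\,||\,m^\ast)$ was a genuine random-oracle query. I then apply the generalized forking lemma of~\cite{bellare2006multi}: rewind \A to that query, reply with a fresh $h^{\ast\prime}\neq h^\ast$, and obtain a second forgery $(R^\ast, z^{\ast\prime})$ on the same $R^\ast$. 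Dividing the two verification relations $g^{z^\ast}=R^\ast Y^{h^\ast}$ and $g^{z^{\ast\prime}}=R^\ast Y^{h^{\ast\prime}}$ gives $\log_g Y = (z^\ast - z^{\ast\prime})(h^\ast - h^{\ast\prime})^{-1} \bmod q$; since $Y = g^{a}h$, \C recovers $x = \log_g Y - a \bmod q$. Standard forking bookkeeping turns a non-negligible $\epsilon$ into a non-negligible \dl advantage, with the usual quadratic loss in $\epsilon$ and linear loss in $q_{H_2}$, and a running time within a constant factor of \A's.

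Finally, the main obstacle. The genuinely delicate step is the threshold signing simulation in $\mathcal{O}_S$: because the embedding forces all participating \emph{honest} shares to remain unknown to \C, the aggregate HVZK transcript $(R_j, z_j)$ must be split into partial signatures $z_{i,j}$ that each pass the per-share check $g^{z_{i,j}} = R_{i,j}\cdot \pk_i^{\lambda_i h_j}$ while being consistent with the commitments $(E_{i,j},D_{i,j})$ \emph{fixed during preprocessing, before} $h_j$ is determined. This is the familiar FROST-style circularity, and making it rigorous under plain \dl is where the proof must work hardest; I would address it exactly as in the FROST analysis~\cite{komlo2021frost} --- using the programmability of the binding-factor random oracle together with condition (iv), which keeps the secret nonces $\hat{e}_{i,j}, \hat{d}_{i,j}$ unqueried and hence the simulated commitments freely reprogrammable. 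A secondary, routine concern is ensuring the rewind lands on the correct $H_2$ index and that no simulated signature collides with the forgery query; both are absorbed into the forking lemma's probability analysis.
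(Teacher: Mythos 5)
Your proposal is correct and follows the same overall skeleton as the paper's proof: a selective-ID reduction to (EC)DLP that embeds the challenge along the target identity path, answers the extraction/preprocessing/hash oracles by honest simulation plus lazy sampling, and finishes with the generalized forking lemma to obtain two forgeries sharing $R^\ast$ and extract the discrete log. Two points differ in a way worth noting. First, the embedding location: you place the challenge in the last-level group key $Q_{\id^\ast_k}$ and recover $x$ as $\log_g Y - a$, whereas the paper plants $\omega=g^{a^\ast}$ directly as the honest participant's share $\pk_{\id_{\ell,t}}$ and unwinds the Lagrange combination $\sk_{\id_\ell}=\sum_i\lambda_i\sk_{\id_{\ell,i}}$ after forking to isolate $a^\ast$; both are serviceable, and yours has the advantage that the remaining public-key shares can be derived consistently by interpolation in the exponent. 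Second, and more substantively, the signing oracle on the target identity: the paper's challenger simply \emph{rejects} all signing queries on $\id^\ast$, whereas Definition~\ref{def:Experiment1} only forbids the pair $(m^\ast,\vec{\id}')$ for prefixes of $\id^\ast$, so an adaptive chosen-message adversary is in principle entitled to signatures on $\id^\ast$ for $m\neq m^\ast$. You confront exactly this case with the HVZK simulator plus reprogramming of the binding-factor oracle in the FROST style, which is the technically hardest step and the one your writeup correctly flags; the paper sidesteps it entirely at the cost of proving security against a strictly weaker adversary than its own definition admits. In short, your route is slightly more laborious but more faithful to the stated security experiment; the paper's is shorter but leaves the target-identity signing simulation unaddressed.
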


\begin{theorem}\label{theorem:FSsecurity}
$\borg$ provides $\lambda_1$-bit signer-side fail-stop security against quantum-capable adversaries controlling up to ${(t-1)}$-out-of-$n$ signing participants, as formalized in Definition~\ref{def:Experiment2}, and $\lambda_2$-bit non-repudiation security against quantum adversaries with access to one-out-of$t$ signing participants, as captured in Definition~\ref{def:Experiment3}. Both guarantees rely on the hardness of breaking the second preimage resistance of a cryptographically secure hash function, while providing $\kappa$ bit verifier-side security via $\eufsidcmia$ in the random oracle model as defined in Definition~\ref{def:Experiment1}.\vspace{+1mm}
\end{theorem}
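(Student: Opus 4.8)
The plan is to decompose the statement into its three constituent guarantees and handle them independently. The $\kappa$-bit verifier-side guarantee is immediate: it is exactly the $\eufsidcmia$ property of Definition~\ref{def:Experiment1}, which Theorem~\ref{the:verifier-side} already establishes under $(EC)DLP$ hardness in the random oracle model, so I would simply invoke it. The two remaining guarantees---$\lambda_1$-bit signer-side fail-stop security (Definition~\ref{def:Experiment2}) and $\lambda_2$-bit non-repudiation (Definition~\ref{def:Experiment3})---are the substantive part, and both reduce to the second-preimage resistance of $H_1$. The conceptual backbone is the standard fail-stop observation: for a fixed message and set of public keys and commitments, the verification equation $g^{z_j} = R_j\cdot(\cdots)^{h_j}$ admits exponentially many satisfying pairs $(R_j,z_j)$, so a quantum adversary who can solve $(EC)DLP$ can certainly produce \emph{some} valid signature; what it cannot control is whether its chosen $R_\mathrm{j}'$ coincides with the specific value $R_\mathrm{j}$ that the honest signer $P_t$ would reconstruct in $\borg.\pof$, since that value is bound to $P_t$'s secret preimages $(\hat e_{t,\mathrm j},\hat d_{t,\mathrm j})$ through $H_1$ rather than to the group elements the adversary observes.

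For signer-side security I would build a reduction \C that, given a quantum \A controlling $t-1$ signers whose output passes $\borg.\mverify$ on a fresh $m^\ast$ while $\borg.\pof$ returns ``Not A Forgery'', extracts a second preimage of $H_1$. The key steps, in order, are: (i) observe that ``Not A Forgery'' is returned precisely when $R_\mathrm{j}'=R_\mathrm{j}$, so an undetected forgery forces the adversary's component to equal the honest reconstruction from $P_t$'s revealed nonces; (ii) argue that \A's entire view---the published commitments $E_{t,\mathrm j},D_{t,\mathrm j}$, at most $t-1$ key shares, and the oracle transcripts, together with the discrete logs a quantum \A can recover---information-theoretically fixes only the exponents $e_{t,\mathrm j},d_{t,\mathrm j}$, not the hash preimages that determine $R_\mathrm j$ in the message-bound $\borg.\pof$ evaluation; (iii) conclude that \A matches $R_\mathrm{j}$ only by either guessing $P_t$'s preimage (probability $2^{-\lambda_1}$) or supplying a distinct preimage hashing consistently, i.e., a second preimage of $H_1$, so the undetected-forgery probability is $\negl(\lambda_1)$. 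Here the exclusion in Definition~\ref{def:Experiment2}---that \A never obtained the uncorrupted signers' nonces---is exactly what keeps those preimages hidden.

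Non-repudiation is the mirror image, and I would treat it with the same reduction template: \A now controls one participating signer, follows the protocol to obtain an honestly generated $\sigma_k=(R_\mathrm j',z_\mathrm j')$, and tries to emit a proof $\pi^\ast\neq$``Not A Forgery'' that survives $\borg.\fverify$. Since the honest signature's $R_\mathrm j'$ is the one determined by the genuine preimages, a passing $\pi^\ast$ must contain alternate preimages that clear $\borg.\fverify$'s public-key consistency check yet reconstruct some $R_\mathrm j\neq R_\mathrm j'$; producing such alternates again amounts to a second preimage of $H_1$, bounding the event by $\negl(\lambda_2)$. I expect the main obstacle to be signer-side step (ii): one must rigorously separate what an $(EC)DLP$-breaking quantum adversary \emph{can} compute (the discrete logs $e_{t,\mathrm j},d_{t,\mathrm j}$ of the public commitments, which already suffice to forge a verifying signature) from what it provably \emph{cannot} (the hash preimages on which the message-bound $\borg.\pof$ reconstruction depends), and to show that the $\borg.\pof$ binding indeed routes $R_\mathrm j$ through the hidden preimages rather than through the already-exposed exponents. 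This is precisely where second-preimage resistance, and not $(EC)DLP$, does the work, and where the threshold bookkeeping---that \A never sees $P_t$'s contribution---must be tracked carefully.
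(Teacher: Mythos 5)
Your proposal matches the paper's proof in all essentials: the verifier-side guarantee is discharged by invoking Theorem~\ref{the:verifier-side}, and both the signer-side fail-stop and non-repudiation guarantees are established via reductions ($\mathcal{C}_1$ simulating the honest participant $P_t$, $\mathcal{C}_2$ simulating all $t$ signers) that embed a second-preimage challenge of $H_1$ into the hash-based commitments $e_{i,\mathrm{j}} \as H_1(\hat{e}_{i,\mathrm{j}}\,\|\,\cdot)$, $d_{i,\mathrm{j}} \as H_1(\hat{d}_{i,\mathrm{j}}\,\|\,\cdot)$ and argue that an undetected forgery (resp.\ a false proof of forgery) forces a colliding preimage, with the quantitative $\lambda_1$/$\lambda_2$ bounds coming from Grover-type search. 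Your step~(ii), separating what a DLP-breaking adversary can recover (the exponents) from what it cannot (the hash preimages on which the $\borg.\pof$ reconstruction of $R_{\mathrm{j}}$ depends), is exactly the binding the paper relies on, stated slightly more explicitly than in the paper's own text.
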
\vspace{-2mm}

\noindent Full security proofs are presented in Appendix~\ref{subsec:securityproof}.

\subsection{Instantiation of $\borg$ for 5G Network} 
This section outlines the instantiation of the $\borg$ algorithm for 5G and beyond mobile networks, providing a high-level overview while detailing each step in Figure~\ref{fig:5gprotocol}.

\noindent \textbf{Solution Architecture.} The proposed architecture adopts a two-layer design comprising: {\em (1)} a newly introduced Core Key Generator (CKG) for key generation and system initialization within the 5G core; {\em (2)} the Access and Mobility Management Function (AMF); {\em (3)} a set of Base Stations (BSs); and {\em (4)} User Equipment (UE), representing end-user devices such as smartphones, laptops, and IoT nodes.

\begin{figure}
	\centering
	\includegraphics[scale=0.37]{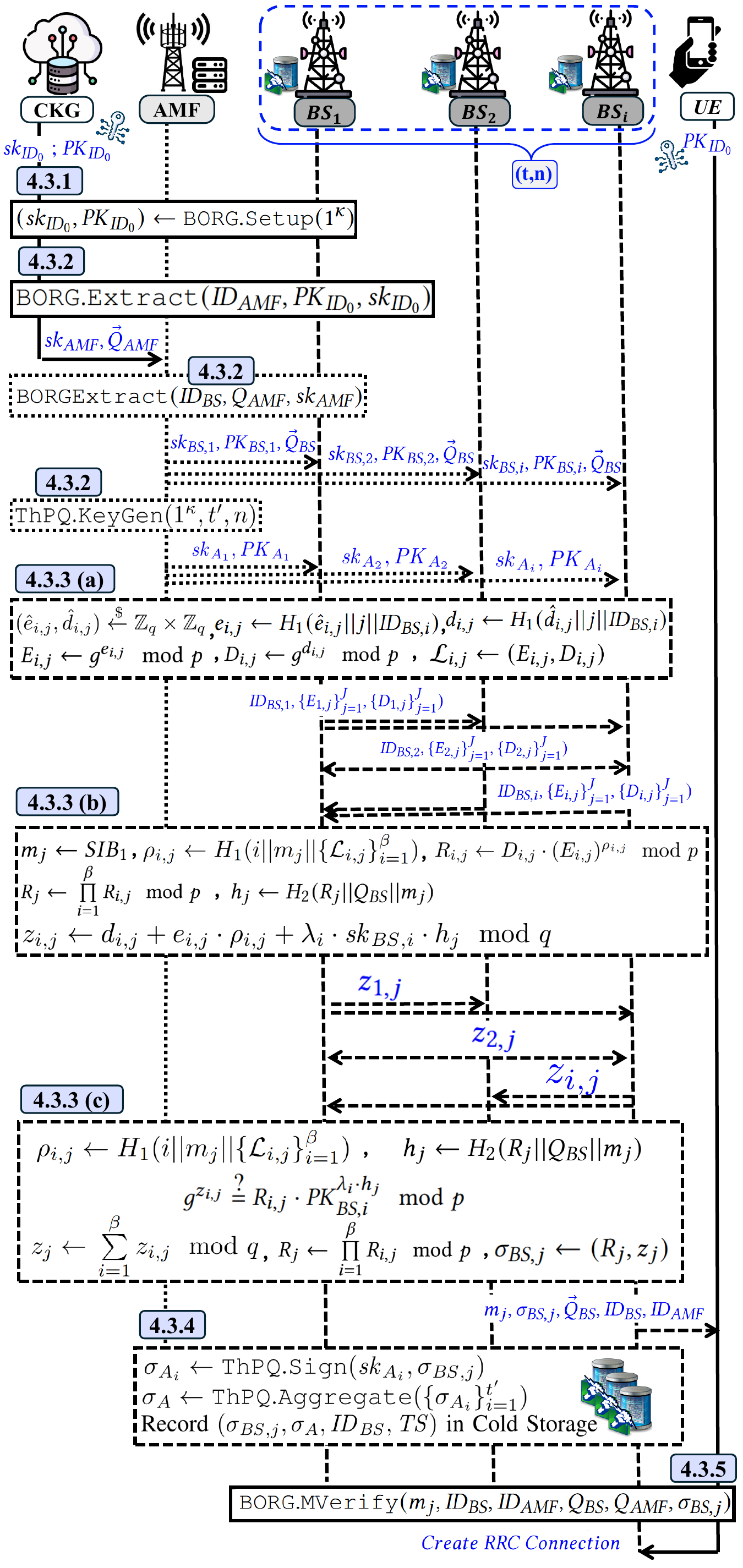}
	\caption{Our protocol for authenticating 5G cellular BSs}
	\label{fig:5gprotocol}
\end{figure}

\noindent \textbf{Our Authentication Protocol Overview:} The CKG manages key setup for the AMF, which in turn handles key extraction for all BSs. Key extraction occurs periodically. Each entity is identified by an $\id$ concatenated with a secret share expiry timestamp, used as input in key extraction. CKG key pairs, valid for years, are pre-installed in the UE’s USIM. BS keys default to a one-day validity, adjustable as needed. Due to physical exposure, BSs benefit from the $(t,n)$ threshold, requiring compromise of multiple entities, making frequent updates (e.g., hourly) both effective and practical. These periods are configurable by the 5G core.   

Key extraction is efficient for both AMF and CKG, scaling efficiently even at a global level. The derived secret keys are securely distributed from the AMF to the BSs using authenticated control channels. In practice, this can be achieved via the Xn Application Protocol (XnAP)—which supports secure inter-gNB signaling—or through out-of-band non-3GPP access mechanisms similar to those used for eSIM remote provisioning, ensuring mutual authentication, confidentiality, and integrity during key transfer~\cite{ETSIxnap}.   
Similarly, Base Station IDs can be distributed using the Protocol Extension Container Information Element (\textit{ProtocolExtensionContainer IE}) available in the XnAP messages. This IE is particularly kept to carry additional elements for communication, all while keeping backward compatibility.
The $(t,n)$ threshold ensures that at least $t$ BSs collaborate to generate a valid signature, with nonces preprocessed in batches to further reduce costs. 
All participating BSs also sign the resulting $SIB$ signature using their $\thpq$ secret shares and record the final audit signature in distributed cold storage~\cite{yavuz2019system, nouma2023practical, le2022efficient}. Since each BS holds the aggregated SIB signature and independently logs it using the $(t',n)$ threshold audit signature (where ${t\leq t'}$) via the $\thpq$ scheme, we eliminate any single point of failure. This approach offers a scalable solution for long-term archival in large-scale 5G deployments and enables fault-tolerant PQ threshold audit logging to support PM forgery detection. 

The UE verifies the SIB1 signature using only the pre-installed master key, with other public key data sent alongside the signature, ideal for resource-constrained devices. In the event of forgery, BSs disclose their nonces to generate a proof, which is sent to the AMF, enabling a system halt if needed. Precisely, the FS mechanism allows the core network to isolate compromised BSs and suspend authentication, preventing forged SIBs from spreading. Upon provable detection, the system halts affected operations, enabling swift, policy-enforced recovery with minimal disruption.






\subsubsection{System Initialization Phase}\label{subsubsec:1} 
In the two-layered architecture, the CKG handles the one-time system setup during initial mobile network deployment by executing $\borg.\setup$ as defined in Algorithm~\ref{Alg:HITFSS1}.

\subsubsection{Key Extraction Phase}\label{subsubsec:2}
\looseness-1 This phase initiates with the CKG running $\borg.\keyextract$ (Algorithm~\ref{Alg:HITFSS1}) to derive the AMF's key pair $(\sk_\amf, \vec{Q}_\amf)$ from the master secret $\sk_{\id_0}$, which is then securely transmitted to the AMF. The AMF then applies the same procedure to generate threshold-shared keys for the BSs. Under a $(t, n)$ configuration (e.g., 2-of-3), each $\bs_i$ receives a share $\sk_{\bs,i}$ and public key $\pk_{\bs,i}$, with all BSs sharing a group verification key $\vec{Q}_\bs$. Any $t$ out of $n$ can jointly produce a valid signature verifiable by $\vec{Q}_\bs$. Additionally, the AMF executes $\thpq.\keygen$ to produce the audit public key $\pk_A$ and secret key shares $(\sk_{A_1}, \dots, \sk_{A_n})$, enabling any $t' \geq t$ BSs to generate a valid $\thpq$ signature for secure audit logging. All keys are distributed over secure channels.

\subsubsection{Signing Broadcast Messages Phase} \label{subsubsec:3}
To sign SIB1, $\beta \in [t, n]$ BSs collaboratively generate a group signature through three phases: {\em (a) Preprocessing:} As defined in $\borg.\preprocess$ (Algorithm~\ref{Alg:HITFSS2}), this phase is precomputed for a given window (e.g., daily). Each $\bs_i$ uses its NRCell\_ID ($\id_{\bs,i}$) to generate random nonces and compute commitment pairs $(\{E_{i,j}\}_{j=1}^{J}, \{D{i,j}\}_{j=1}^{J})$ for $J$ messages. These are exchanged, verified, and appended to local commitment lists $\mathcal{L}_{i,j}$, yielding a consistent list $\mathcal{L}_i$ shared across all BSs, optionally published in a predefined location (e.g., a bulletin).   
{\em (b) Threshold Signing:} Following $\borg.\sign$ (Algorithm~\ref{Alg:HITFSS3}), each $\bs_i$ uses its key share and commitment list $\mathcal{L}_i$ to sign message ${m_j \as SIB_1}$, then sends the resulting signature share to the remaining ${\beta-1}$ signers. The number of signers $\beta$ is predetermined prior to signing.  
{\em (c) Aggregation:} Upon receiving the signature shares, each BS verifies the signature shares and aggregates the valid ones into the final group signature $\sigma_{\bs,j} = (R_j, z_j)$. 
The BS with the strongest signal strength broadcasts the $(m_j, \sigma_{\bs,j}, {\vec{Q}}_\bs, \id_\bs, \id_\amf)$.

\subsubsection{Audit Logging Phase}\label{subsubsec:5}  
\looseness-1
To strengthen forgery detection and independent of the 5G signing process, a subset of $t'$ participating BSs (where ${t\leq t'}$) collaboratively sign the SIB1 signature $\sigma_{\bs,j}$. Using $\thpq.\sign$, each $\bs_i$ signs $\sigma_{\bs,j}$ and sends the share to the other participants. Upon collecting $t'$ valid $\thpq$ shares, each BS runs $\thpq.\aggregate$ to produce the final threshold signature $\sigma_A$. The audit log entry $(\sigma_{\bs,j}, \sigma_A, \id_\bs, \ts)$ is then stored in distributed cold storage, ensuring fault-tolerant threshold auditability and and supporting proof-of-forgery verification against quantum adversaries.  In practice, the distributed cold storage is realized as a lightweight, append-only audit repository replicated across trusted RAN/Core entities (e.g., AMF, gNBs), where updates occur periodically via authenticated control channels (e.g., XnAP, TLS). This design imposes negligible signaling overhead while maintaining tamper-evident, verifiable records for post-mortem analysis.

\subsubsection{Signature Verification Phase}
\label{subsubsec:4}
The signature verification process follows $\borg.\mverify$ (Algorithm~\ref{Alg:HITFSS3}). Given a signature on the SIB1 message, the group verification key $Q_\bs$, the AMF's public key $Q_\amf$, and the CKG’s master public key $\pk_{\id_0}$ (pre-installed on the user’s device), the UE verifies the broadcast message to authenticate the BSs before initiating an RRC connection.

\subsubsection{Forgery Detection and Verification Phase} \label{subsubsec:7}  
Each BS has access to the authenticated distributed cold storage that records all $SIB$ messages and aggregated signatures. Upon suspecting forgery, a BS initiates the PoF protocol with the other $t$ participating signers, as illustrated in Figure~\ref{fig:forgery}. For a suspected forged signature $\sigma'_{\bs}$ on a SIB1 message, the $\beta$ participating signers identify the message index $\mathrm{j}$ from the cold storage log and reveal their corresponding preprocessing nonces $(\hat{e}_{i,\mathrm{j}}, \hat{d}_{i,\mathrm{j}})$ for $i = 1, \dots, \beta$. They invoke $\borg.\pof$ (Algorithm~\ref{Alg:HITFSS4}) to reconstruct $R_j$ and evaluate the validity of $\sigma'_{\bs}$.  Forgery verification is performed by the AMF using $\borg.\fverify$. Upon receiving the proof $\pi$ from the signing BSs, the AMF verifies the identities and public keys of the involved BSs, reconstructs $R_j$ from $\pi$, and compares it with $R_j'$ in $\sigma'_{\bs}$. A mismatch indicates a breach in the security of the authentication system and provably confirms that the underlying security assumption (i.e., ECDLP) has been broken, prompting a scoped, policy-gated response rather than an indiscriminate shutdown: the AMF isolates and re-keys the implicated $(t,n)$ signing group, while UEs fall back to scanning for alternative authenticated BSs per the configurable behavior of Section~\ref{subsec:authfailact}, preserving availability.

\noindent\textbf{Resistance to halt-induced DoS.} A halt is triggered only by a \textbf{provable} forgery, i.e., a signature that satisfies \borg . \mverify~yet does not match the legitimate signers' committed nonces $R_j$ (Algorithm~\ref{Alg:HITFSS4}). An adversary injecting arbitrary or malformed signatures fails verification at the UE; such messages are discarded during normal SIB1 processing (Section~\ref{subsec:authfailact}) and never reach the \pof / \fverify~path, so they cannot induce a halt. Producing a signature that \textit{passes} \mverify~while differing from the legitimate one requires solving the (EC)DLP (Theorem~\ref{the:verifier-side}) or quantum capability; for a computationally bounded adversary, this is infeasible, leaving no low-cost path to a detection event. The only way to force a halt is therefore to perform precisely the cryptographic break that the fail-stop property is designed to expose. Furthermore, an insider controlling fewer than $t$ signers cannot fabricate a forgery claim against an honestly generated signature: by the $\lambda_2$-bit non-repudiation guarantee (Definition~\ref{def:Experiment3}, Theorem~\ref{theorem:FSsecurity}), a \fverify-accepting proof $\pi^*$ for a legitimate signature would require a second preimage of $H_1$, and the AMF independently re-validates each claim via \borg .\fverify~using $\alpha_k$ before acting. Thus, neither external injection nor insider misbehavior provides a controllable DoS trigger.

\begin{figure}
	\centering
	\includegraphics[scale=0.38]{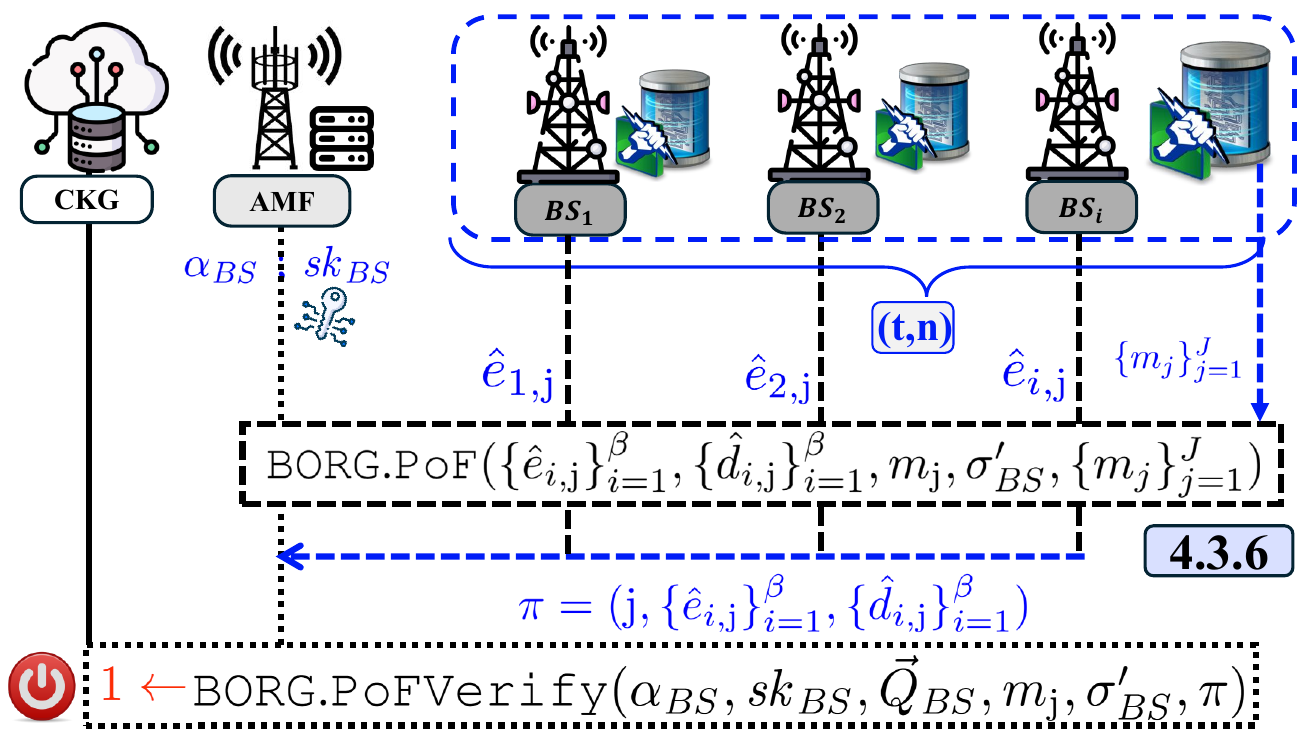}
	\caption{Instantiation of Forgery Proof and Verification}
	\label{fig:forgery}
\end{figure}


\subsubsection{Authentication failure action}\label{subsec:authfailact}
The distributed design of $\borg$ reduces the likelihood of authentication failure. In rare cases where the UE cannot verify a BS (e.g., due to signer unavailability or missing authentication), it may continue scanning for alternative BSs. This behavior can be made configurable at the UE level, allowing users to prioritize either connectivity or security. For example, a UE may temporarily connect to an unauthenticated BS under constrained conditions while monitoring for a verifiable, authenticated BS to hand over to when available.


\subsubsection{Handling Mobility, Handover Protocol, and Roaming Scenario} The hierarchical and distributed design of $\borg$ enables efficient authentication in various mobility scenarios. During intra-gNB handovers, the UE can rely on previously authenticated $SIB$s, avoiding reauthentication. Newly issued $SIB$s remain verifiable using the core network’s PK embedded in the authenticated structure. Also, $\borg$’s threshold design allows uninterrupted authentication given the UE stays within range of any $t$ out of $n$ collaborating BSs, reducing handover latency and eliminating redundant reauthentication across adjacent BSs within the same domain. 
In roaming scenarios, authentication becomes more complex as the UE connects to a different network operator. $\borg$ assumes that the UE stores the $\pk_{\id_0}$ of its home network’s core-PKG within the USIM or eSIM. To authenticate $SIB$s from the serving network, the UE must obtain the public key of the serving network's core-PKG, distributed securely via non-3GPP access (e.g., Wi-Fi) and verified through a certificate signed by the home network’s core-PKG. eSIM technology enables this secure, dynamic credential provisioning. 
For the roaming scenario, our approach can leverage existing telecom-level PKI frameworks such as STIR/SHAKEN~\cite{wendt2019rfc}, mandated for U.S. carriers and based on operator-issued certificates for inter-operator authentication. These nationwide deployments demonstrate the practicality of implementing our solution within today’s cellular PKI ecosystem. Similar efforts are emerging globally, and 3GPP is developing complementary specifications for network PKI. 
While techniques, such as dynamic threshold adaptation or precomputed forward authentication shares, may further enhance handover efficiency across networks, we leave these optimizations to future work.

\subsubsection{Protection Against Relay Attacks}  
While $\borg$ enables UEs to authenticate $SIB$s and detect fake BSs, it does not prevent relay attacks, where an adversary rebroadcasts valid $SIB$s from a legitimate BS at higher signal strength to mislead UEs. Since relayed messages remain valid, verification alone is insufficient. Mitigating such attacks would require distance-bounding protocols, which demand substantial changes to cellular standards and hardware. 
To mitigate relay attacks without overhauling existing protocols, we extend $\borg$ with a time-bounded authentication mechanism. Each BS signs $SIB$s with a timestamp and short validity period, determined using transmission delay (reflecting propagation and processing time) and cryptographic delays (reflecting signing time) stored in a secure lookup table. The UE verifies message freshness by checking the timestamp against the current time, discarding any expired messages. $\borg$’s threshold design strengthens this defense by requiring timely inputs from multiple BSs, making it difficult for an attacker to relay a complete, valid message within the allowed window.

\vspace{-3mm}
\section{Performance Evaluation} \label{sec:PerformanceEvaluation}
This section evaluates and compares $\borg$ with alternative authentication schemes for 5G initial bootstrapping.

\subsection{Configuration and Experimental Setup}
\label{subsec:Configuration} 

\noindent\textbf{Hardware:} We assessed the efficiency of $\borg$ protocol utilizing a standard desktop equipped with an $12^{th}$ Gen Intel Core ${i7-12700H} @ 3.50~GHz$, $16~GiB$ RAM, a $512~GiB$ SSD, and Ubuntu $22.04.4~LTS$. The 5G testbed setup follows the configuration in Section~\ref{sec:feasibility}.  
Real network packets were investigated using the Network Signal Guru Android app installed on a OnePlus Nord 5G smartphone~\cite{netsigguru}.

\looseness-1 \noindent\textbf{Libraries:} We employed OpenSSL library\footnote{\url{https://openssl-library.org/}} for cryptographic primitives such as hash functions and EC operations (e.g., point multiplication, modular arithmetic), the Open Quantum-Safe library\footnote{\url{https://openquantumsafe.org/}} for NIST-PQC schemes, the Ringtail library\footnote{\url{https://github.com/daryakaviani/ringtail}} for the $\thpq$, and the blst\footnote{\url{https://github.com/Chia-Network/bls-signatures}} library for the $\textit{BLS}$ signature.

\noindent\textbf{Parameter Selection:} We configured the classical security level to $128$ bits, following NIST recommendations, and the post-quantum security to NIST Level I~\cite{alagic2022status}. NIST Level~I provides quantum resistance approximately equivalent to $128$-bit classical security. All cryptographic operations of the elliptic curve were performed over the standard curve \textit{secp224k1}, defined on a $224$-bit prime field, with \textit{SHA-256} used as the cryptographic hash function.

\noindent \textbf{srsRAN Configuration:} We observe that for the first SIB1 message, the srsRAN gNB utilizes $79$ bytes out of the allowed $372$ bytes. Accordingly, all subsequent evaluations report the computational and communication overhead for signing a $79$-byte SIB1 message, as reflected in the tables. Note that even considering slightly larger SIB1s observed from real networks, our results remain consistent.

\noindent\subsection{Evaluation Metrics} \label{subsec:EvaluationResult}  
Quantitative metrics include computational costs (e.g., signing, verification), 5G processing, cryptographic overhead (signature and key sizes), communication overhead (e.g., for transmission over-the-air),  and end-to-end (E2E) delay. The 5G processing latency captures the time network entities (e.g., gNBs and UEs) spend handling cryptographic material in packets, excluding cryptographic computations, and includes processing of signature fragments when applicable. Qualitative evaluation focuses on system architecture, accountability, and breach resiliency.

\noindent\subsection{Selection Criteria for Comparison Baselines} \label{subsec:SelectionCriteria} 
For PQ counterparts, we consider NIST-PQC signatures like lattice-based $\textit{ML\mbox{-}DSA}$ \cite{dang2024module}, $\textit{FN\mbox{-}DSA}$~\cite{soni2021falcon}, and hash-based $\textit{SLH\mbox{-}DSA}$ \cite{cooper2024stateless}. Given the very large signature and execution times of hash-based alternatives (e.g., $\textit{SLH\mbox{-}DSA}$ with a $7856$-byte signature, nearly $3\times$ that of $\textit{ML\mbox{-}DSA}$), we mainly focus on $\textit{ML\mbox{-}DSA}$ for direct comparison. While $\textit{FN\mbox{-}DSA-1024}$\cite{soni2021falcon} offers smaller signatures and similar performance, it relies on floating-point operations, making it less suitable for mobile platforms (UE). Even with efficient implementation, $\textit{FN\mbox{-}DSA}$ still incurs fragmentation, albeit less than $\textit{ML\mbox{-}DSA}$. Given $\textit{ML\mbox{-}DSA}$’s relative simplicity and prominence in the NIST PQC process, we adopt it as the main PQ baseline. Threshold variants of all PQC schemes are expected to incur significantly higher overhead, rendering them impractical for our use case.

\begin{table*}[ht]
\caption{Quantitative comparison of the alternative signature schemes in an overly ideal scenario (\textit{sending only signature}) for authenticating 5G cellular BSs.}
    \centering
    \resizebox{0.99\textwidth}{!}{
    \Large
    \renewcommand{\arraystretch}{1.2} 
    \begin{tabular} {|c||c|c|c|c|c|c|c|c|@{}c@{}|}
         \hline \textbf{Scheme} & \textbf{Sign (ms)} & \textbf{Verif. (ms)}  & \textbf{Packet Proc.(ms)} & \textbf{Transmission (ms)} &  \textbf{Crypto./Comm.(B)} & \textbf{PK (B)} &\textbf{E2E Delay (ms)}  \\\hline
          \textbf{$\textit{BLS}$} \cite{boneh2001short} & $0.42$ & $1.15$ & $0.03$ & $<0.01$ & $48/-$ & $96$ &  $1.60$\\ \hline
         \textbf{$\textit{ML\mbox{-}DSA}$} \cite{dang2024module} &  $0.12$ & $0.03$ & $0.57$ & $160.02$-$1280.02$ & $2420/2976$  & $1312$ & $160.74$-$1280.74$\\ \hline
        \textbf{$\textit{Schnorr\mbox{-}HIBS}^\dagger$} \cite{singla2021look} & $0.30$ & $1.27$ & $0.04$ & $<0.01$ & $64/-$ & $32$ & $1.61$ \\ \hline\hline
         \textbf{$\textit{Centralized\mbox{-}\borg}$} & $0.33$ & $1.27$ & $0.04$ & $<0.01$ &  $64/-$ & $32$ & $1.64$\\ \hline
         \textbf{$\textit{(2,3)\mbox{-}\borg}^\ast$} & $1.12$ & $1.27$ &  $0.04$ & $<0.01$ & $64/-$ & $32$ & $2.43$\\ \hline
         \textbf{$\textit{(2,3)\mbox{-}\borg}$} & $1.68$ & $1.27$ &  $0.04$ & $<0.01$ & $64/-$ & $32$ & $2.99$\\ \hline
    \end{tabular}}
\begin{tablenotes}
       \item {\small E2E delay represents the total time for signature generation, 5G delay (packet processing and transmission), and signature verification. A dash ($-$) indicates \textit{no additional} overhead, i.e., fits within the default SIB1 packet. $(2,3)\mbox{-}\borg^\ast$ considers a precomputed preprocessing phase. $^\dagger\textit{EC\mbox{-}Schnorr}$~\cite{schnorr1991efficient} shares identical timing and size metrics with $\textit{Schnorr\mbox{-}HIBS}$~\cite{singla2021look}.} 
    \end{tablenotes} 
    \label{tab:PerformanceComparison2}\vspace{-3mm}
\end{table*}

Given that $\textit{EC\mbox{-}Schnorr}$’s structure is inherently more amenable to threshold signing, supports practical implementation optimizations (e.g., \cite{maxwell2019simple}), and exhibits comparable timings and sizes to $\textit{ECDSA}$ \cite{johnson2001elliptic}, we adopt $\textit{EC\mbox{-}Schnorr}$ \cite{schnorr1991efficient} as the foundation for our scheme and the primary baseline for comparison. 
For conventionally secure signatures, we consider $\textit{EC\mbox{-}Schnorr}$~\cite{johnson2001elliptic} as an optimized standard, $\textit{BLS}$~\cite{boneh2001short} for its aggregation capabilities, and $\textit{Schnorr\mbox{-}HIBS}$~\cite{singla2021look} as a closely related certificateless scheme.  
Among closely related approaches, the scheme in~\cite{hussain2019insecure} utilizes a certificate chain with three distinct signature schemes, where our performance evaluation covers their architecture using standardized signature metrics. 
The work in~\cite{ross2024fixing} introduces a ``broadcast but verify" architecture using certificate-chain cryptography for 5G bootstrapping, proposing a separate \textit{signingSIB} message instead of embedding signatures in $SIB$s to improve efficiency.  Notably, $\borg$ can serve as a drop-in replacement in their design, enhancing both efficiency and security. $\textit{BARON}$~\cite{lotto2023baron}, a token-based protocol using symmetric encryption enables UE authentication but does not protect $SIB$s, allowing tampering without token invalidation, and is thus excluded from direct comparison. All baselines are evaluated against both centralized and threshold variants of $\borg$. 

\subsection{Experimental Results}
\looseness-1 Table~\ref{tab:PerformanceComparison2} presents a quantitative comparison of candidate schemes for signing a single SIB1 message, evaluating signing/verification time, 5G processing, cryptographic/communication overhead, and end-to-end (E2E) latency. Table~\ref{tab:PerformanceComparison3} extends this with both qualitative and quantitative analysis in the full 5G hierarchical bootstrapping context. We report the average results for 10000 iterations for all the schemes when not set up in the 5G testbed. As the 5G testbed requires manual intervention, we report the average of 10 iterations when the schemes are run on the testbed.

\subsubsection{Quantitative Comparison} 
This section presents the computational and communication overhead of $\borg$, accompanied by a comparison to alternative signatures. 

\noindent $\bullet$ \underline{\textit{Computational Costs:}} 
We begin by analyzing the signing and verification complexity for a single SIB1 message, with the results summarized in Table~\ref{tab:PerformanceComparison2}.  
%
$\textit{EC\mbox{-}Schnorr}$ and $\textit{BLS}$ exhibit low E2E delay, with $\textit{BLS}$ incurring slightly higher computational cost due to pairing-based operations. 
While $\textit{ML\mbox{-}DSA}$ achieves comparable execution times through optimized implementation, its large signature size results in substantial 5G communication overhead and a total delay of $1280.74$~ms, making it impractical for 5G SIB1 authentication, even without considering certificate hierarchy. $\textit{SLH\mbox{-}DSA}$, with a signature nearly three times larger and slower signing and verification ($11$~ms and $0.84$~ms, respectively), is even less suited for 5G authentication.  
$\textit{Schnorr\mbox{-}HIBS}$ and $\textit{Centralized$-$\borg}$ exhibit nearly identical execution time and communication overhead, resulting in comparable E2E delay for signing a single SIB1 message. In the threshold $\borg$ variant (e.g., $(2,3)$ configuration), signing takes approximately $1.12$~ms without preprocessing and $1.68$~ms with preprocessing included. Its verification time matches that of $\textit{Schnorr\mbox{-}HIBS}$ and $\textit{Centralized$-$\borg}$, which is particularly crucial for the resource-constrained UEs. Also, $\borg$ adopts Ringtail~\cite{boschini2024ringtail} as the $\thpq$ instantiation for distributed audit logging due to its performance benefits. Since $\thpq$ only signs the SIB1 signatures and does not affect BS/UE operations, it is excluded from the core performance comparison. Its preprocessing, signing, and verification take $89.4$, $3.29$, and $1.2$~ms, respectively.


In the full 5G evaluation (Table~\ref{tab:PerformanceComparison3}), which accounts for transmission of keys, certificates, and IDs, BS signing costs remain consistent with those reported in Table~\ref{tab:PerformanceComparison2}. For full verification, however, the UE must validate the SIB1 signature and, in certificate-based schemes, also verify certificates for the AMF and BS. This highlights the efficiency advantage of hierarchical schemes over flat alternatives such as $\textit{BLS}$, $\textit{EC\mbox{-}Schnorr}$, and $\textit{ML\mbox{-}DSA}$. While $\textit{BLS}$ benefits from signature aggregation, its pairing-based verification introduces considerable computational cost. Similarly, while $\textit{ML\mbox{-}DSA}$ offers relatively fast verification, its large key and signature sizes and high communication overhead result in a substantial E2E delay, rendering it infeasible for 5G bootstrapping, where SIB1 packets are sent every $20\sim160$~ms. 
Our $\textit{Centralized}\mbox{-}\borg$ achieves a total delay of $1.64$~ms, making it approximately $404\sim3221\times$ faster than $\textit{ML\mbox{-}DSA}$ ($662.47\sim{}5282.47$~ms). The $\textit{(2,3)\mbox{-}\borg}$, is slightly slower than the centralized version and $\textit{Schnorr\mbox{-}HIBS}$, yet remains significantly faster ($222\sim1767\times$) than $\textit{ML\mbox{-}DSA}$. Like $\textit{Schnorr\mbox{-}HIBS}$, both centralized and threshold $\borg$ variants transmit only $144$ bytes of cryptographic artifacts, fitting entirely within a single SIB1 packet, demonstrating the superior efficiency of our future-proof scheme over existing NIST-PQ solutions.

\begin{table*}[ht]
\caption{Comparison of candidate signature schemes for authenticating SIB1.}
    \centering
    \resizebox{0.99\textwidth}{!}{
    \Large
    \renewcommand{\arraystretch}{1.2} 
    \begin{tabular}{|@{}c@{}||@{}c@{}|@{}c@{}|@{}c@{}|@{}c@{}|@{}c@{}|@{}c@{}|@{}c@{}|} \hline 
    \multirow{2}{*}{\textbf{Scheme}} & \textbf{System Architecture} & \textbf{Sign} &\textbf{Full Verification}  & \textbf{5G} & \textbf{Crypto./Comm.} & \textbf{E2E} \\
    
    &\textbf{and Features}&\textbf{Delay (ms)}& \textbf{Delay (ms)}&\textbf{Delay (ms)}&\textbf{Overhead (B)}&\textbf{Delay (ms)}\\\hline
          \textbf{$\textit{BLS}$} \cite{boneh2001short} & 2-Level Certificate with Aggregation & $0.42$ & $3.46$  & $0.05$ & $240/-$ & $3.93$ \\ \hline
         \textbf{$\textit{EC\mbox{-}Schnorr}$} \cite{schnorr1991efficient} & 2-level Certificate & $0.30$ & $3.80$ & $0.05$  & $256/-$ & $4.15$\\ \hline
         \textbf{$\textit{ML\mbox{-}DSA}$} \cite{dang2024module} & 2-Level Certificate & $0.12$ & $0.12$ & $662.23\sim5282.23$ & $9884/12276$ & $662.47\sim5282.47$ \\ \hline
        \textbf{$\textit{Schnorr\mbox{-}HIBS}$} \cite{singla2021look} & Hierarchical& $0.30$ & $1.27$ & $0.04$ & $144/-$ & $1.61$\\ \hline\hline
    \textbf{$\textit{Centralized\mbox{-}\borg}$} & Hierarchical, Fail-Stop &$0.33$ & $1.27$ & $0.04$ & $144/-$ & $1.64$\\ \hline
         \textbf{$\textit{(2,3)\mbox{-}\borg}$} & Hierarchical, Fail-Stop, Threshold &$1.68$ & $1.27$ &  $0.04$ & $144/-$ & $2.99$\\ \hline
    \end{tabular}}
    \begin{tablenotes}
       \item {\small E2E delay presents the total time for signature generation, 5G delay and full verification. A dash ($-$) indicates \textit{no (additional)} overhead, i.e., fits within the default SIB1 packet.} 
    \end{tablenotes} 
    \label{tab:PerformanceComparison3} \vspace{-3mm}
\end{table*}

\noindent $\bullet$ \underline{\textit{Communication Overhead:}} 
In 5G BS authentication, certi- ficate-based schemes must transmit the SIB1 message, signature, public keys, and two certificates (for the AMF and BS) to establish key authenticity. In contrast, hierarchical schemes transmit only the SIB1 signature, corresponding public keys, and identities, eliminating certificates and reducing communication overhead.   
As shown in Table~\ref{tab:PerformanceComparison3}, flat schemes such as $\textit{BLS}$ and $\textit{EC\mbox{-}Schnorr}$ incur higher cryptographic overhead, while $\textit{ML\mbox{-}DSA}$ imposes substantial cryptographic and communication costs. Specifically, $\textit{ML\mbox{-}DSA}$ suffers from fragmentation, requiring $34$ network packets and incurring a total overhead of $12276$ bytes. In contrast, hierarchical schemes like $\textit{Schnorr\mbox{-}HIBS}$ and both centralized and threshold variants of $\borg$ maintain a compact $144$-byte overhead, fitting within a single SIB1 packet. 
While the threshold-$\borg$ requires inter-gNB communication for signature aggregation via the XnAP interface~\cite{etsi2} (using SCTP over IP~\cite{etsi3}), this delay is implementation-dependent and typically below $10$~ms~\cite{stewart2007stream}. Even under this upper bound, $\textit{(2,3)\mbox{-}\borg}$ remains significantly more efficient than NIST-PQ alternatives requiring fragmentation.

\noindent$\bullet$ \textit{\underline{UE Side Overhead}:} Due to the resource-constrained setting, it is imperative to investigate the computational overhead in the UE, posed by the schemes. We observe that all the schemes (including ours) except ML-DSA introduce only $0.015-0.02$ ms additional packet processing times in the UE. The potential reason for this is that UE only needs to process one SIB1 packet containing few additional bytes pertaining to the associated signature. In contrast, the packet processing time for ML-DSA on the UE side is much higher ($0.26$ms). However, note that UE still needs to verify the signature after processing the packet. For verification, we observe around $1.15-1.27$ms overhead for all the schemes (including ours) except ML-DSA. For verification, ML-DSA is faster with only $0.03$ms overhead. However, note that the primary concern with ML-DSA is its much larger communication overhead and the complexity it introduces in the protocol stack. Moreover, in the event of packet loss, BORG remains robust, as the signature can be delivered in the next broadcast of the SIB1 message without causing any packet-loss-related communication overhead. In contrast, NIST-PQC, like ML-DSA, requires synchronization from multiple consecutive SIB1 as discussed in section III. 

\subsubsection{Qualitative Comparison} 
This section presents a qualitative comparison of $\borg$ against related approaches, including certificate-based, hierarchical, and threshold schemes, in the context of 5G bootstrapping authentication.

\noindent $\bullet$ \underline{\textit{Limitations of Authentication with Flat Hierarchy:}} Direct use of non-hierarchical signatures for 5G BS authentication requires a certificate chain to authenticate public keys, increasing communication overhead and requiring the UE to verify both the SIB1 signature and the certificates for the AMF and BS keys. This adds considerable computational burden and is essential to mitigate threats like FBSs and MiTM attacks. As shown in Table~\ref{tab:PerformanceComparison3}, even efficient, conventionally secure schemes such as $\textit{EC\mbox{-}Schnorr}$ and $\textit{BLS}$ incur notable cryptographic overhead. For instance, if SIB1 configurations exceed $120$ bytes, $\textit{EC\mbox{-}Schnorr}$ requires fragmentation and delivery over two packets. Full-PQC schemes are even more demanding, with total communication costs nearing $12$~KB and requiring extensive fragmentation, posing serious reliability and availability issues. This makes PQC signatures not only computationally infeasible but also vulnerable to authentication failure if any signature fragment is lost in transit.

\looseness-1 \noindent $\bullet$ \underline{\textit{Limitations of Thresholding for 5G Authentication in the}} \underline{\textit{PQ Era:}} Threshold signatures, particularly those offering PQ guarantees, impose substantial overhead, making them impractical for 5G. Even conventionally secure threshold schemes, such as Schnorr-based signatures (e.g., $\textit{FROST}$~\cite{komlo2021frost}), which resemble $\textit{Schnorr\mbox{-}HIBS}$ and $(2,3)\mbox{-}\borg$, incur higher computational costs when applied to SIB1 authentication in a $(2,3)$ setting.   Several threshold variants of NIST-PQC signatures rely on resource-intensive techniques: multi-party computation (e.g., threshold-$\textit{Dilithi}$- $\textit{um}$ and threshold-$\textit{FN\mbox{-}DSA}$ reportedly require $12$~s and $6$~s to sign~\cite{cozzo2019sharing}), homomorphic hashing and commitments (e.g., $\textit{Dilizium}$~\cite{laud2022dilizium} incurs hundreds of milliseconds and $21120$-byte signatures), and fully homomorphic encryption~\cite{fu2021secure}, which leads to delays in the order of seconds. Additional constructions, such as those based on the Fiat–Shamir with Aborts paradigm~\cite{damgaard2022two} or hash-and-sign lattice approaches, suffer from transformation complexity and abort management overhead.  In contrast, $\borg$ offers a lightweight, practical alternative for 5G BS authentication. It maintains distributed trust and compromise resilience through thresholding while supporting FS security and PQ forgery detection, achieving a strong balance between security and deployability.

\noindent $\bullet$ \underline{\textit{PQ Assurances:}} As detailed in Section~\ref{sec:feasibility}, NIST-PQC signatures are currently unsuitable for 5G SIB1 authentication due to their large sizes, high computational costs, and substantial communication overhead. These signatures exceed the $SIB$ packet size limit. Experimental results (Tables~\ref{tab:PerformanceComparison2}-\ref{tab:PerformanceComparison3}) show that full-PQC schemes require excessive fragmentation and processing delays; for instance, $\textit{ML\mbox{-}DSA}$ requires $8$ additional packets for signature delivery, $12$ with the public key, and $33$ when including a 2-level certificate chain. 
Even without new message types, repeated SIB1 transmissions must carry fragmented signatures, further stressing the system. Broadcast unreliability and physical-layer constraints exacerbate deployment challenges. While conventional-secure schemes are efficient, they lack forgery detection and offer no protection against compromised BSs.  
$\borg$ achieves $3$ orders of magnitude faster execution and $85\times$ lower communication overhead than full-PQC alternatives like $\textit{ML\mbox{-}DSA}$, while offering distributed trust, forgery detection, and breach resiliency. Its efficiency, comparable to conventional hierarchical schemes, makes it a practical solution for 5G BS authentication.

\section{Related Work}\label{sec:RelatedWork}
We organize prior work on 5G BS authentication by technical approach, analyzing each category's contributions and limitations with respect to the design goals of \borg.

\noindent\textbf{PKI-Based BS Authentication.} Early proposals adapted PKI frameworks to protect SIB messages. Lee et al.~\cite{lee2009extended} and 
Zheng~\cite{zheng1996authentication} established certificate-based foundations for mobile network authentication. Hussain et al.~\cite{hussain2019insecure} provided the first systematic analysis of 5G bootstrapping vulnerabilities and proposed attaching signatures 
and certificate chains to SIB1 and SIB2 messages. Ross et al.~\cite{ross2024fixing} proposed a ``broadcast-but-verify'' model that transmits the signature in a separate \textit{signingSIB} message to decouple authentication overhead from SIB1. Gao et al.~\cite{gao2021evaluating} explored delegated signing to reduce per-BS computational cost, and Wuthier et al.~\cite{wuthier2025base} combined multi-factor authentication with offline blockchain-based certificate delivery. The 3GPP standardization body has explored PKI-based SIB protection in TR 33.809~\cite{FBSSpecifications}, though SIB1 remains unprotected in the current RRC specification~\cite{RRCSpec}. All PKI-based schemes share a fundamental limitation: certificate chains for AMF and BS 
keys routinely exceed the 372-byte SIB1 limit, requiring fragmentation across multiple packets, compounding verification cost on resource-constrained UEs, and remaining entirely vulnerable to quantum-capable adversaries.

\noindent\textbf{Token- and Symmetric-Based Schemes.} TESLA~\cite{perrig2003tesla} introduced lightweight broadcast authentication using a one-way key chain with timed key disclosure, relying solely on symmetric primitives at the cost of loose time synchronization. BARON~\cite{lotto2023baron} employed symmetric tokens for pre-authentication defense, introducing the concept of a Closed Trusted Entity (CTE) for secure connection initialization and handover in 5G 
networks. While these approaches avoid asymmetric certificate overhead, they do not protect SIB content itself: broadcast parameters remain susceptible to tampering without invalidating any token or key chain.

\noindent\textbf{Identity-Based and Certificate-Free Schemes.} To eliminate certificate chains, IBS-based schemes derive BS and AMF signing keys hierarchically from a master key pre-installed in the USIM, achieving compact overhead within a single SIB1 packet. Singla et al.~\cite{singla2021look} introduced \textit{Schnorr-HIBS}, a hierarchical IBS scheme that serves as the closest conventional-secure baseline to \borg. Ramadan et al.~\cite{ramadan2020identity} explored server-aided IBS 
to offload UE verification cost. Yu et al.~\cite{yu2024protecting} and Sun and Peng~\cite{sun20255g} further refined two-level HIBS constructions for LTE/5G. While IBS schemes achieve the best efficiency among conventional approaches, all rely on ECDLP hardness and are therefore broken by quantum-capable adversaries. Critically, none provides distributed trust, forgery detection, or any accountability mechanism following a BS compromise.

\noindent\textbf{Threshold and Distributed Authentication.} A small number of efforts have explored threshold signatures to distribute signing responsibility across multiple BSs. Sengupta and Lakshminarayanan~\cite{sengupta2024fast} explored online-offline threshold IBS for 5G IoT settings, and Vikhrova et al.~\cite{vikhrova2022multi} examined multi-SIM support scenarios involving coordinated BS authentication. However, these efforts operate under idealized assumptions and do not address the combined requirements of efficiency under 5G packet-size constraints, accountability upon BS compromise, or long-term breach resiliency. The broader challenge of designing a threshold authentication scheme that simultaneously meets 5G's strict overhead constraints and provides verifiable forgery detection remains unaddressed in prior work.

\noindent\textbf{Privacy and AKA-Layer Security.} Orthogonal to BS broadcast authentication, a line of work addresses privacy and mutual authentication at the NAS layer. Alnashwan et al.~\cite{alnashwan2024strong} presented a UC-secure authentication and handover protocol with strong user privacy guarantees, and Wang et 
al.~\cite{wang2021privacy} proposed encryption and KEM-based countermeasures targeting linkability vulnerabilities in 5G-AKA. These protocols operate after the bootstrapping phase and are orthogonal to SIB 
broadcast authentication; they assume a trustworthy BS connection has already been established, precisely the guarantee that \borg is designed to provide.

\noindent\textbf{Post-Quantum and Hybrid Solutions.} Recent work has begun addressing long-term quantum threats in cellular security. Efforts on PQ-AKA protocols~\cite{rossi2024enhancing, damir2022beyond, al2024post} apply lattice-based KEMs and PQ identification schemes to the NAS-layer authentication and key agreement procedures. Hybrid constructions combining NIST-PQC KEMs with symmetric primitives~\cite{vuppala2023post, ko20255g, scalise2024applied} aim to reduce the overhead of full PQC adoption for primary UE authentication. However, all of these efforts target unicast session establishment and are structurally incompatible with the one-to-many SIB broadcast 
authentication setting: KEMs establish shared secrets between two parties, and applying full PQC signatures to SIBs would compound fragmentation rather than resolve it, as demonstrated in Section~\ref{sec:feasibility}. None of these works addresses the unique constraints of initial BS bootstrapping authentication under 5G's strict size, timing, and broadcast requirements.

\vspace{-3mm}
\section{Limitations, Conclusion, and Future Work} 
\label{sec:conclusion} 
In conclusion, our feasibility assessment exposes the practical limitations of directly integrating NIST-PQC and conventional signature schemes into 5G bootstrapping authentication, primarily due to excessive signature sizes, certificate overhead, and fragmentation. To future-proof 
the 5G authentication, we propose $\borg$, a lightweight, distributed, and compromise-resilient framework that enables threshold authentication with forgery detection while meeting strict 5G constraints. Although $\borg$ does not provide full PQ security, it is designed to operate across all bootstrapping connections and currently secures critical $SIB$ messages. It is also worth noting that $\borg$ does not address privacy concerns related to UE-to-BS connections, passive eavesdropping, or other physical-layer threats, and remains vulnerable to overshadow attacks. Future work will extend protection to mitigate overshadow attacks in the PQ era. Looking ahead to 6G, we expect $\borg$ to help accelerate the adoption of 
secure and practical bootstrapping mechanisms beyond what 5G was able to incorporate.

\vspace{-3mm}
\section*{Acknowledgment} \label{sec:ack} \vspace{-1mm}
This work is supported by the CNS (2350213), NSF Grant No. 2112471, the University of Texas System Rising STARs Award (No. 40071109), and the startup funding from the University of Texas at Dallas. 
\vspace{-3mm}

\appendix
\section{APPENDIX} \label{sec:appendix}
\subsection{Acronyms}\label{subsec:acronmys}
A complete list of acronyms used throughout the paper is provided in Table~\ref{tab:notations}.

\begin{table}[ht]
\caption{List of Acronyms.}
    \centering
    \resizebox{0.99\linewidth}{!}{
    \small
    \renewcommand{\arraystretch}{1.2}
    \begin{tabular}{|l@{}|l@{}|}
        \hline
        \textbf{Acronyms} & \textbf{Description} \\ \hline \hline
        3GPP & Third Generation Partnership Project\\ \hline
        5G-RAN & 5G Radio Access Network \\ \hline
        5G-CN & 5G Core Network \\ \hline
        AMF & Access and Mobility Management Function\\ \hline
        AKA & Authentication and Key Agreement\\ \hline
        BS & Base Station\\ \hline
        CKG & Core Key Generator \\ \hline
        DL-SCH & Downlink Shared Channel\\ \hline
        ECDLP & Elliptic Curve Discrete Logarithm Problem\\ \hline
        \multirow{2}{*}{EUF-sID-CMIA} & Existential Unforgeability under a Selective-ID\\
        & adaptive Chosen Message-and-ID Attacks \\ \hline
        FBS & Fake Base Station\\ \hline
        FS & Fail-Stop\\ \hline
        gNB & Next Generation NodeB\\ \hline
        HIBS & Hierarchical Identity-Based Signature\\ \hline
        IBS & Identity-Based Signature\\ \hline
        ID & Identity of an entity (e.g., MAC address) \\ \hline
        IMSI & International Mobile Subscriber Identity \\ \hline
        LTE & Long Term Evolution\\ \hline
        MAC & Message Authentication Code\\ \hline
        MIB & Master Information Block\\ \hline
        MiTM & Man-in-The-Middle\\ \hline
        MPK & Master Public Key\\ \hline
        msk & Master Secret Key \\ \hline
        NAS & Non Access Stratum\\ \hline    
        NIST & National Institute of Standards and Technology\\ \hline
        PK & Public Key \\ \hline
        PKI & Public Key Infrastructure\\ \hline
        PLMN & Public Land Mobile Network\\ \hline
        PM & Post-Mortem \\ \hline
        PQC & Post-Quantum Cryptography \\\hline
        RRC & Radio Resource Control\\ \hline
        SDR & Software Defined Radio \\\hline
        SIB & System Information Block\\ \hline
        sk & Secret Key \\ \hline
        UE & User Equipment\\ \hline
        USIM & Universal Subscriber Identity Module\\ \hline
    \end{tabular}}
    \label{tab:notations}
\end{table}

\subsection{Signature Verification Correctness} \label{sec:verificationcorrectness} 
We demonstrate the correctness of the verification procedure by proving that an honestly generated signature in $\borg$ signature scheme satisfies the verification equation. 

Let $\sigma_{k,j} = (R_j, z_j)$ be a valid signature at hierarchy level $k$, generated according to the $\borg.\sign$ algorithm. Any verifier can validate this signature using the $\borg.\mverify$ procedure (Algorithm \ref{Alg:HITFSS3}), by computing the intermediate values $h_{\id_\ell}$ for each $\ell = 1, 2, \dots, k$, as well as $Q$ and $h_j$. 
The correctness of the signature is verified by checking whether the following equation holds:\vspace{-2mm}
\begin{equation*}
    g^{z_j} \stackrel{\mbox{?}}{=} R_j \cdot (Q \cdot Q_{\id_k} \cdot (\pk_{\id_0})^{\prod\limits_{\ell=1}^{k}h_{\id_\ell}})^{h_j} \mod p
\end{equation*}
\noindent which confirms the integrity and authenticity of the signed message under the Schnorr-based structure of the $\borg$ scheme. By Lagrange interpolation, the secret shares satisfy:\vspace{-2mm}
\begin{equation*}
    \sum_{i=1}^{\beta} \lambda_i \cdot \sk_{\id_{k,i}} = \sk_{\id_k}.    
\end{equation*}

\noindent Thus, if all signers behave honestly, then:\vspace{-2mm}
\begin{equation*}
    \scalebox{0.85}{$
    \begin{aligned}
    z_j = \sum_{i=1}^{\beta} z_{i,j} = \sum_{i=1}^{\beta} (d_{i,j} + e_{i,j} \cdot \rho_{i,j}) + h_j \cdot \sum_{i=1}^{\beta} \lambda_i \cdot \sk_{\id_{k,i}} \mod q.    
    \end{aligned}
    $}
\end{equation*}   
    
\noindent Hence, the left-hand side of the verification algorithm is:\vspace{-2mm}
\begin{equation*}
    g^{z_j} = g^{r_j} \cdot g^{h_j \cdot \sk_{\id_k}} = R_j \cdot g^{h_j \cdot \sk_{\id_k}} \mod p.    
\end{equation*}

\noindent Given the hierarchical key extraction procedure ensures:\vspace{-2mm}
\begin{equation*}
    g^{\sk_{\id_k}} = Q \cdot Q_{\id_k} \cdot \left( \pk_{\id_0} \right)^{\prod_{\ell=1}^{k} h_{\id_\ell}} \mod p,    
\end{equation*}
\noindent where:\vspace{-2mm}
\begin{equation*}
    Q \as \prod\limits_{\ell=1}^{k-1} (Q_{\id_\ell})^{\prod\limits_{\omega=\ell+1}^{k}h_{\id_\omega}}
\end{equation*}
Putting it together:\vspace{-2mm}
\begin{equation*}
    g^{z_j} = R_j \cdot \left( Q \cdot Q_{ID_k} \cdot PK_{ID_0}^{\prod_{\ell=1}^k h_{ID_\ell}} \right)^{h_j} \mod p.
\end{equation*}
This matches the verifier’s equation. Therefore, the verification algorithm accepts the signature.




\subsection{Security Proof} \label{subsec:securityproof} 

\noindent \textbf{Theorem 1.} 
\textit{If an adversary \A can $(q_{E}, q_P, q_S, q_{H_1,H_2})$-break $\borg$ in the random oracle model (Definition \ref{def:Experiment1}) with an advantage $\epsilon$ in time $\tau$ while having access to at most ${(t\mbox{--}1)}$-out-of-$n$ signing participants, where $q_E, q_P, q_S, q_{H_1}$, and $q_{H_2}$ denote queries to key extraction, preprocessing, signing, and hash functions $H_1$ and $H_2$, then an algorithm \C can be constructed to break the (EC)DLP in group $\mathbb{G}$.}

\begin{proof}\vspace{-1.5mm}
We assume that the adversary  \A is able to compromise ${t-1}$ signing participants by accessing the key extraction oracle  $\mathcal{O}_{E}$. It can also query the preprocessing oracle  $\mathcal{O}_{P}$, signing oracle  $\mathcal{O}_{S}$, and hash function oracles $\mathcal{O}_{H_1,H_2}$. We assume there are $t$ users in each level of the hierarchy. We note that the security proof can be generalized to $n$ users with $t$-out-of-$n$ thresholding. \A has control over $(t-1)$ signing participants. We consider a challenger \C which invokes \A as a black box and handles input and output queries, simulating the honest signing participant ($P_t$) across all queries and algorithms.  
By embedding a random challenge (in this case, a DLP instance) $\omega = g^{a^\ast} \in \mathbb{G}$ in query responses for the target $\id^\ast$ chosen by the forger. \A starts by picking a target identity ${\vec{\id}}^{\ast} = (\id^{\ast}_1, \dots, \id^{\ast}_\ell) \in {\mathbb{Z}_p}^\ell$ as the challenge identity.

\noindent $\bullet$ \ul{\textbf{Setup:}}  
Given $\kappa$, \C executes the $\borg.\setup(1^\kappa)$ (Algorithm \ref{Alg:HITFSS1}). It randomly selects ${\alpha_0 \asrand \mathbb{Z}_q}$, derives $\sk_{\id_0} \as H_1(\alpha_0)$, and computes ${\pk_{\id_0} \as g^{\sk_{\id_0}} \mod p}$. It then provides $(\pk_0, params)$ to \A while keeping $(\alpha_0, \sk_{\id_0})$ secret. 

\noindent $\bullet$ \ul{\textbf{Execute \A$^{\mathcal{O}_{\texttt{E}}, \mathcal{O}_{P}, \mathcal{O}_{S}}(\pk_{\id_0}, params)$:}} 
\A can adaptively issue a polynomially bounded number of queries, with \C acting as the honest party $P_t$ as follows. \vspace{-1mm}
\begin{itemize}[leftmargin=*]
    \item[-] \ul{\textit{Secret Key Extraction Oracle ($\mathcal{O}_{E}$):}} For a query~identity $\vec{\id} = (\id_1,\dots, \id_\ell) \in {\mathbb{Z}_p}^\ell$, where ${\vec{\id} \neq {\vec{\id}}^{\ast}}$ or any ${\id^{\ast}_i \not\in \vec{\id}}$ for ${i = 1, \dots, \ell}$, \C follows the $\borg.\keyextract(.)$ procedure and returns $(\{\sk_{\id_{\ell,i}}\}_{i=1}^{t},$ $\{\pk_{\id_{\ell,i}}\}_{i=1}^{t}, \vec{Q}_{\id_\ell})$.  
    For $\vec{\id} = {\vec{\id}}^{\ast}$, \C embeds the challenge $\omega \as g^{a^\ast} \mod p$ by
    setting $\pk_{\id_{\ell,t}} = \omega$. It then derives the secret and public keys for the remaining ${t-1}$ participants by choosing $a_i \asrand \mathbb{Z}_q$ and computing $\pk_{\id_{\ell,i}} \as \omega^{\lambda_t} \cdot g^{\sum\limits_{i=1}^{t-1} {\lambda_i \cdot a_{\ell,i}}}$ for ${i=1,\dots,t-1}$. The group verification key $Q_{\id^\ast}$ and $\Vec{Q}_{\id^\ast}$, follows the same procedure as the $\borg.\keyextract(.)$. \vspace{-1mm}

    \item[-] \ul{\textit{Preprocessing Oracle ($\mathcal{O}_{P}$):}} Given $J$ and the ID of the signing participants (where $\vec{\id} \neq {\vec{\id}}^{\ast}$ or any $\id^{\ast}_i \not\in \vec{\id}$ for ${i = 1, \dots, \ell}$), it returns the commitment value of the participants ($\mathcal{L}_{i} \as (i, \{E_{i,j}\}_{j=1}^{J},$ $\{D_{i,j}\}_{j=1}^{J})$) for ${i=1,\dots,t-1}$, following $\borg.\preprocess(.)$. \vspace{-1mm}
    
    \item[-] \ul{\textit{Signing Oracle ($\mathcal{O}_{S}$):}}  For a message $m$ and $\vec{\id} = (\id_1,\dots, \id_\ell) \in {\mathbb{Z}_p}^\ell$, where $\vec{\id} \neq {\vec{\id}}^{\ast}$ or any $\id^{\ast}_i \not\in \vec{\id}$ for ${i = 1, \dots, \ell}$, \c responds by using the $\borg.\keyextract(.)$ algorithm to extract the secret key, obtain the commitment values $\borg.\preprocess(.)$, and produce a valid signature relying on the signing algorithm $\borg.\sign(.)$. The signature queries on the target $\id^\ast$ would be rejected. \vspace{-1mm}
    
    \item[-] \ul{\textit{Hashing with a Random Oracle ($\mathcal{O}_{H_1}, \mathcal{O}_{H_2}$):}} Let $\mathcal{L}_{H_1}$ and $\mathcal{L}_{H_2}$ denote the query logs for the hash functions $H_1$ and $H_2$, respectively. Upon receiving a query to $H_1$ on the tuple $(\id_\ell, \vec{Q}_{\id_\ell})$, the challenger \C checks $\mathcal{L}_{H_1}$: if an entry exists, it returns the stored value; otherwise, it samples $x \asrand \mathbb{Z}_q$, stores it in $\mathcal{L}_{H_1}$, and returns $x$. Similarly, for a query to $H_2$ on $(R_j, Q_{\id_\ell}, m_j)$, \C checks $\mathcal{L}_{H_2}$ and either returns the stored value or samples $x' \asrand \mathbb{Z}_q$, stores it, and returns $x'$. \vspace{-1mm}
\end{itemize}

\noindent$\bullet$ \ul{\textit{\textbf{Forgery of \A:}}} \A produces a valid signature ($m^\ast, \sigma^\ast$) for $\id^\ast$ under the master and group public keys ($\pk_{\id_0}, Q_{\id_k}$).  
\A wins the experiment if satisfies the conditions mentioned in Definition \ref{def:Experiment1}.

\noindent$\bullet$ \ul{\textit{\textbf{Solution to DLP via $\mathcal{C}(\omega)$:}}} Given the adversary \A with access to $t-1$ signing participants can produce a signature forgery $\sigma^\ast$, \C can uses \A as a black-box forger, and utilize the generalized forking lemma, solves the DLP for the embedded challenge $\omega$, as shown below: \vspace{-1mm}
\begin{itemize}[leftmargin=*]
    \item Applying GFL, the adversary \A is run twice with the same random tape while having different hash queries to $\mathcal{O}_{H_2}$, then, \C can obtain two signature forgeries $\sigma^\ast = ({R_j}^\ast, {z_j}^\ast), {\sigma^\ast}'=({R_j}^\ast, {{z_j}^\ast}')$.  \vspace{-1mm}
    
    \item Given the query responses from preprocessing and $\mathcal{H}_{1}$ are the same, we get to the following two equations:\vspace{-2mm} 
    \begin{equation*}
    \scalebox{0.85}{$
    \begin{aligned}
        \hspace{-3mm}\sum\limits_{i=1}^{t}{z^\ast}_{i,j}= \sum\limits_{i=1}^{t}{d^\ast}_{i,j} + \sum\limits_{i=1}^{t} {e^\ast}_{i,j}\cdot {\rho^\ast}_{i,j} + \sum\limits_{i=1}^{t} \lambda_i\cdot \sk_{\id_{i,j}}.{h^\ast}_j \mod q
    \end{aligned}
    $}
    \end{equation*}\vspace{-3mm}
    \begin{equation*}
    \scalebox{0.85}{$
    \begin{aligned}
    \hspace{-3mm}\sum\limits_{i=1}^{t}{z^\ast}'_{i,j}= \sum\limits_{i=1}^{t}{d^\ast}_{i,j} + \sum\limits_{i=1}^{t} {e^\ast}_{i,j}\cdot {\rho^\ast}_{i,j} + \sum\limits_{i=1}^{t} \lambda_i\cdot \sk_{\id_{i,j}}.{h^\ast}'_j \mod q
    \end{aligned}
    $}
\end{equation*}\vspace{-3mm}
    \item From the above equations, we get:\vspace{-3mm}
     \begin{equation*}
    \scalebox{0.85}{$
    \begin{aligned}
        \sk_{\id_\ell} = \frac{\sum\limits_{i=1}^{t}({z^\ast}_{i,j}-{z^\ast}'_{i,j})}{{h^\ast}_j-{h^\ast}'_j} 
    \end{aligned}
    $}
    \end{equation*}\vspace{-3mm}
    \begin{equation*}
    \scalebox{0.85}{$
    \begin{aligned}
        a^\ast = \frac{1}{\lambda_t} \times (\frac{\sum\limits_{i=1}^{t}({z^\ast}_{i,j}-{z^\ast}'_{i,j})}{{h^\ast}_j-{h^\ast}'_j} - \sum\limits_{i=1}^{t-1} \lambda_i \cdot \sk_{\id_\ell})
    \end{aligned}
    $}
    \end{equation*}
    \item Finally, \C obtains the DLP of $\omega$ in $\mathbb{G}$. 
\end{itemize}
\vspace{-5mm}
\end{proof}

\noindent \textbf{Theorem 2.} 
\textit{$\borg$ provides $\lambda_1$-bit signer-side fail-stop security against quantum-capable adversaries controlling up to ${(t-1)}$-out-of-$n$ signing participants, as formalized in Definition~\ref{def:Experiment2}, and $\lambda_2$-bit non-repudiation security against quantum adversaries with access to one-out-of$t$ signing participants, as captured in Definition~\ref{def:Experiment3}. Both guarantees rely on the hardness of breaking the second preimage resistance of a cryptographically secure hash function, while providing $\kappa$ bit verifier-side security via EUF-sID-CMIA in the random oracle model as defined in Definition~\ref{def:Experiment1}.}



%
%
\vspace{-2mm}
\begin{proof}
Let $\mathcal{A}$ be a quantum-capable adversary with access to the signing oracle, preprocessing oracle, and control over $({t-1})$ out of $n$ signing participants. We assume $t$ users per level in the hierarchy, though the proof generalizes to $t$-out-of-$n$ thresholding. 
Assume, for contradiction, that \A succeeds in the signer-side fail-stop experiment (Definition~\ref{def:Experiment2}) with non-negligible advantage $\epsilon$, producing a forged signature $\sigma^\ast_k$ on a message $m^\ast$ that (i) passes verification ($1 \as \borg.\mverify(m^\ast, \vec{\id}_k, \vec{Q}_{\id_k}, \sigma^\ast_k)$), and (ii) cannot be proven invalid via $0 \as \borg.\pof(\{\hat{e}^\ast_{i,\mathrm{j}}\}_{i=1}^{t}, \{\hat{d}^\ast_{i,\mathrm{j}}\}_{i=1}^{t}, m^\ast,$ $\sigma^\ast_k, hist)$, i.e., honest signers fail to produce a valid forgery proof. 
We construct a reduction algorithm $\mathcal{C}_1$ that uses \A as a black box, handles queries, and simulates the honest signing participant ($P_t$) across all queries and algorithms. For the target message $m^\ast$, $\mathcal{C}_1$ embeds a second preimage challenge as a commitment from an honest signer by programming commitment values (${e^\ast_{i,t}\as H_1(\hat{e}^\ast_{t,j}||j||\id_{k,t}), d^\ast_{i,t}\as H_1(\hat{d}^\ast_{t,j}}$ ${||j||\id_{k,t})}$) where the random nonces are ${(\hat{e}^\ast_{i,t}, \hat{d}^\ast_{i,t})\asrand \zq\times\zq}$.  

\looseness-1
Since forgery detection in $\borg.\pof$ depends on these hash-based commitments that derive the shared component $R_j$, any successful forgery must reproduce these commitments without access to the original random nonces. Thus, if \A outputs a successful forgery $\sigma^\ast$ and corresponding alternate preimage $(\hat{e}_{i,t}^{*\prime}, \hat{d}_{i,t}^{*\prime})$ that satisfy the conditions mentioned in Definition~\ref{def:Experiment2} such that $H_1(\hat{e}_{i,t}^{*\prime}||j||\id_{k,t}) = H_1(\hat{e}_{i,t}^{*}||j||\id_{k,t})$ and $H_1(\hat{d}_{i,t}^{*\prime}||j||\id_{k,t}) = H_1(\hat{d}_{i,t}^{*}||j||\id_{k,t})$ while ${(\hat{e}_{i,t}^{*\prime}, \hat{d}_{i,t}^{*\prime}) \neq (\hat{e}_{i,t}^{*}, \hat{d}_{i,t}^{*\prime})}$, then $\mathcal{C}_1$ outputs $(\hat{e}_{i,t}^{*\prime}, \hat{d}_{i,t}^{*\prime})$ as a valid second preimage for the embedded challenge. This contradicts the assumed hardness of second preimage resistance of $H_1$. 
Hence, under Grover’s algorithm \cite{grover1996fast}, the quantum adversary's success probability is reduced to $\mathscr{O}(2^n)$, yielding $\lambda_1$-bit PQ security for an $n$-bit hash function.
\end{proof}

%
%
\vspace{-2mm}
Based on Definition~\ref{def:Experiment3}, we prove fail-stop non-repudia- tion of $\borg$ under the hardness of second preimage resistance of a cryptographically secure hash function.\vspace{-2mm}

\begin{proof}
Let $\mathcal{A}$ be a quantum-capable adversary with access to preprocessing queries and control over one of the $t$ signing participants. Suppose $\mathcal{A}$ wins the non-repudiation experiment in Definition~\ref{def:Experiment3} with non-negligible probability $\epsilon$; that is, $\mathcal{A}$ participates in generating a valid signature $\sigma_k$ such that (i) it passes verification $1 \as \borg.\mverify(m, \vec{\id}_k, {\vec{Q}}_{\id_k}, \sigma_k)$, and (ii) later constructs a forged proof $\pi^\ast$ satisfying ${1 \as \borg.\fverify(\alpha_{k},}$ ${\sk_{\id_{k-1}}, \vec{Q}_{\id_k}, m, \sigma_k', \pi^\ast)}$, despite $\sigma_k$ being honestly generated according to the signing protocol. We construct a reduction algorithm $\mathcal{C}_2$ that treats $\mathcal{A}$ as a black box, simulates all $t$ signers, and handles all input/output queries. For the message $m$, $\mathcal{C}_2$ embeds second preimage challenges as the commitments of all $t$ signers by programming values ($e^\ast_{i,j}\as H_1(\hat{e}^\ast_{i,j}||j||\id_{k,i}), d^\ast_{i,j}\as H_1(\hat{d}^\ast_{i,j}||j||\id_{k,i}$) where $(\hat{e}^\ast_{i,j}, \hat{d}^\ast_{i,j})\asrand \zq\times\zq$ for all $i=1,\dots,t$.

\looseness-1
Since forgery detection in $\borg.\pof$ relies on hash-based commitments used to derive the shared component $R_j$, a valid forgery must reproduce these commitments without access to the original nonces. 
For the $t$ signers to falsely prove a legitimate signature $\sigma_k$ as a forgery, at least one signer must find a distinct preimage $\hat{e}_{i,j}^{*\prime} \neq \hat{e}^\ast_{i,j}$ such that $H_1(\hat{e}_{i,j}^{*\prime}||j||\id_{k,i}) = H_1(\hat{e}^\ast_{i,j}||j||\id_{k,i})$, for some ${i\in\{1,\dots,t\}}$. If the quantum-capable adversary $\mathcal{A}$ outputs a valid-looking proof of forgery $\pi^\ast$ using such alternate preimage $(\hat{e}_{i,j}^{*\prime}, \hat{d}_{i,j}^{*\prime})$ that satisfies the conditions mentioned in Definition~\ref{def:Experiment3}, then the reduction $\mathcal{C}_2$ extracts a second preimage for the embedded challenge, contradicting the assumed hardness of second preimage resistance of $H_1$. Under Grover’s algorithm~\cite{grover1996fast}, the adversary’s success probability reduces to $\mathscr{O}(2^{n/2})$, yielding $\lambda_2$-bit PQ security for an $n$-bit hash function.
\end{proof}

\bibliographystyle{elsarticle-num}
\bibliography{SalehRef}             

@article{maxwell2019simple,
  title={Simple schnorr multi-signatures with applications to bitcoin},
  author={Maxwell, Gregory and Poelstra, Andrew and Seurin, Yannick and Wuille, Pieter},
  journal={Designs, Codes and Cryptography},
  volume={87},
  number={9},
  pages={2139--2164},
  year={2019},
  publisher={Springer}
}

@article{wuthier2025base,
  title={Base Station Certificate and Multi-Factor Authentication for Cellular Radio Control Communication Security},
  author={Wuthier, Simeon and Kim, Jinoh and Kim, Ikkyun and Chang, Sang-Yoon},
  journal={arXiv preprint arXiv:2504.02133},
  year={2025}
}

@misc{yavuz2019system,
  title={System and method for secure review of audit logs},
  author={Yavuz, Attila Altay},
  year={2019},
  month=jun # "~11",
  publisher={Google Patents},
  note={US Patent 10,318,754}
}

@article{boschini2024ringtail,
  title={Ringtail: Practical Two-Round Threshold Signatures from Learning with Errors},
  author={Boschini, Cecilia and Kaviani, Darya and Lai, Russell WF and Malavolta, Giulio and Takahashi, Akira and Tibouchi, Mehdi},
  journal={Cryptology ePrint Archive},
  year={2024}
}

@article{damgaard2022two,
  title={Two-round n-out-of-n and multi-signatures and trapdoor commitment from lattices},
  author={Damg{\aa}rd, Ivan and Orlandi, Claudio and Takahashi, Akira and Tibouchi, Mehdi},
  journal={Journal of Cryptology},
  volume={35},
  number={2},
  pages={14},
  year={2022},
  publisher={Springer}
}

@article{laud2022dilizium,
  title={Dilizium 2.0: Revisiting two-party crystals-dilithium},
  author={Laud, Peeter and Snetkov, Nikita and Vakarjuk, Jelizaveta},
  journal={Cryptology ePrint Archive},
  year={2022}
}

@article{cooper2024stateless,
  title={Stateless Hash-Based Digital Signature Standard},
  author={Cooper, David and others},
  year={2024},
  publisher={National Institute of Standards and Technology (NIST), David Cooper}
}

@article{dang2024module,
  title={Module-Lattice-Based Digital Signature Standard},
  author={Dang, Thinh and Lichtinger, Jacob and Liu, Yi-Kai and Miller, Carl and Moody, Dustin and Peralta, Rene and Perlner, Ray},
  year={2024},
  journal={National Institute of Standards and Technology (NIST), Thinh Dang, Jacob}
}

@inproceedings{boneh2001short,
  title={Short signatures from the Weil pairing},
  author={Boneh, Dan and Lynn, Ben and Shacham, Hovav},
  booktitle={International conference on the theory and application of cryptology and information security},
  pages={514--532},
  year={2001},
  organization={Springer}
}

@article{johnson2001elliptic,
  title={The elliptic curve digital signature algorithm (ECDSA)},
  author={Johnson, Don and Menezes, Alfred and Vanstone, Scott},
  journal={International journal of information security},
  volume={1},
  pages={36--63},
  year={2001},
  publisher={Springer}
}

@article{chen2021fail,
  title={Fail-Stop Group Signature Scheme},
  author={Chen, Jonathan Jen-Rong and Chiang, Yi-Yuan and Hsu, Wang-Hsin and Lin, Wen-Yen},
  journal={Security and Communication Networks},
  number={1},
  year={2021},
  publisher={Wiley Online Library}
}

@inproceedings{susilo1999fail,
  title={Fail-stop threshold signature schemes based on elliptic curves},
  author={Susilo, Willy and Safavi-Naini, Rei and Pieprzyk, Josef},
  booktitle={Australasian Conference on Information Security and Privacy},
  pages={103--116},
  year={1999},
  organization={Springer}
}

@manual{FBSSpecifications,
    title = {3GPP TS 33.809 Study on 5G security enhancements against False Base Stations (FBS): Certificate based solution for Protecting System Information Messages with Digital Signature in an NPN.},
    note = {\url{https://www.3gpp.org/ftp/TSG_SA/WG3_Security/TSGS3_100Bis-e/Docs/S3-202717.zip}.}
}

@manual{netsigguru,
    title = {Network Signal Guru User Manual},
    note = {\url{https://m.qtrun.com/docs/NSG_Manual_Aug_2017.pdf}.}
}

@manual{ibmreport,
    title = {Cost of a Data Breach Report 2025},
    note = {\url{https://www.ibm.com/reports/data-breach}.}
}

@manual{verizonreport,
    title = {2025 Data Breach Investigations Report},
    note = {\url{https://www.verizon.com/business/resources/reports/dbir/}.}
}

@article{al2022software,
  title={Software-defined vehicular networks (sdvn)},
  author={Al-Mekhlafi, Zeyad Ghaleb},
  journal={International journal of computer science and network security: IJCSNS},
  volume={22},
  number={9},
  pages={231--243},
  year={2022},
  publisher={International Journal of Computer ScienceNetwork Security}
}

@incollection{al2025innovative,
  title={Innovative Security Measures: A Comprehensive Framework for Safeguarding the Internet of Things},
  author={Al-Mekhlafi, Zeyad Ghaleb and Alfhaid, Sarah Abdulrahman},
  booktitle={AI-Driven: Social Media Analytics and Cybersecurity},
  year={2025},
  publisher={Springer}
}

@manual{etsi,
    title = {ETSI TS 104 015 v1.1.1, Cyber Security (CYBER); Quantum-Safe Cryptography (QSC); Efficient Quantum-Safe Hybrid Key Exchanges with Hidden Access Policies},
    note = {\url{https://www.etsi.org/deliver/etsi_ts/104000_104099/104015/01.01.01_60/ts_104015v010101p.pdf}.}
}

@manual{ETSIxnap,
    title = {ETSI TS 138 423 V15.8.0, 5G, NG-RAN, Xn Application Protocol (XnAP)},
    note = {\url{https://www.etsi.org/deliver/etsi_ts/138400_138499/138423/15.08.00_60/ts_138423v150800p.pdf}.}
}

@manual{kt-femtocell,
title = {Interim Findings on KT Network Intrusion Event},
note ={\url{https://www.msit.go.kr/eng/bbs/view.do;jsessionid=n9t3MMqrxQzovx5ap-ggsjQFttFLvDb326auIlJB.AP_msit_2?sCode=eng&mPid=2&mId=4&bbsSeqNo=42&nttSeqNo=1189}}
}

@manual{sk-telecom,
title = {Inside the SK Telecom Data Breach: What Happened and What Companies Can Learn},
note = {https://www.alstonprivacy.com/inside-the-sk-telecom-data-breach-what-happened-and-what-companies}}

@inproceedings{wang2021privacy,
  title={$\{$Privacy-Preserving$\}$ and $\{$Standard-Compatible$\}$$\{$AKA$\}$ Protocol for 5G},
  author={Wang, Yuchen and Zhang, Zhenfeng and Xie, Yongquan},
  booktitle={30th USENIX security symposium (USENIX security 21)},
  pages={3595--3612},
  year={2021}
}

@article{ergezer2020survey,
  title={A survey on threshold signature schemes},
  author={Ergezer, Sinan and Kinkelin, Holger and Rezabek, Filip},
  journal={Network},
  volume={49},
  year={2020}
}

@article{alagic2022status,
  title={Status report on the third round of the NIST post-quantum cryptography standardization process},
  author={Alagic, Gorjan and Apon, Daniel and Cooper, David and Dang, Quynh and Dang, Thinh and Kelsey, John and Lichtinger, Jacob and Liu, Yi-Kai},
  year={2022},
  publisher={US Department of Commerce, National Institute of Standards and Technology~…}
}

@manual{etsi2,
    title = {ETSI TS 138 423 V16.2.0, 5G; NG-RAN; Xn Application Protocol (XnAP)},
    note = {\url{https://www.etsi.org/deliver/etsi_ts/138400_138499/138423/16.02.00_60/ts_138423v160200p.pdf}.}
}

@manual{etsi3,
    title = {ETSI TS 138 422 V17.0.0, 5G; NG-RAN; Xn signalling transport},
    note = {\url{https://www.etsi.org/deliver/etsi_ts/138400_138499/138422/17.00.00_60/ts_138422v170000p.pdf}.}
}

@techreport{stewart2007stream,
  title={Stream control transmission protocol},
  author={Stewart, Randall},
  year={2007}
}

@misc{mubasshir2025gottadetectemall,
      title={Gotta Detect 'Em All: Fake Base Station and Multi-Step Attack Detection in Cellular Networks}, 
      author={Kazi Samin Mubasshir and Imtiaz Karim and Elisa Bertino},
      year={2025},
      eprint={Usenix Security},
}

@inproceedings{ransacked,
author = {Bennett, Nathaniel and Zhu, Weidong and Simon, Benjamin and Kennedy, Ryon and Enck, William},
title = {RANsacked: A Domain-Informed Approach for Fuzzing LTE and 5G RAN-Core Interfaces},
year = {2024},
isbn = {9798400706363},
publisher = {Association for Computing Machinery},
address = {New York, NY, USA},
urlval = {https://doi.org/10.1145/3658644.3670320},
doi = {10.1145/3658644.3670320},
abstract = {Cellular network infrastructure serves as the backbone of modern mobile wireless communication. As such, cellular cores must be proactively secured against external threats to ensure reliable service. Compromised base station attacks against the core are a rising threat to cellular networks, while user device inputs have long been considered as an attack vector; despite this, few techniques exist to comprehensively test RAN-Core interfaces against malicious input. In this work, we devise a fuzzing framework that performantly fuzzes cellular interfaces accessible from a base station or user device, overcoming several challenges in fuzzing specific to LTE/5G network components. We also introduce ASNFuzzGen, a tool that compiles ASN.1 specifications into structure-aware fuzzing modules, thereby facilitating effective fuzzing exploration of complex cellular protocols. We run fuzzing campaigns against seven open-source and commercial cores and discover 119 vulnerabilities, with 93 CVEs assigned. Our results reveal common implementation mistakes across several cores that lead to vulnerabilities, and the successful coordination of patches for these vulnerabilities across several vendors demonstrates the practical impact ASNFuzzGen has on hardening user-exposed cellular systems.},
booktitle = {Proceedings of the 2024 on ACM SIGSAC Conference on Computer and Communications Security},
pages = {2027–2041},
numpages = {15},
keywords = {cellular security, fuzzing, rogue base station},
location = {Salt Lake City, UT, USA},
series = {CCS '24}
}

@INPROCEEDINGS{ltefuzz,
  author={Kim, Hongil and Lee, Jiho and Lee, Eunkyu and Kim, Yongdae},
  booktitle={2019 IEEE Symposium on Security and Privacy (SP)}, 
  title={Touching the Untouchables: Dynamic Security Analysis of the LTE Control Plane}, 
  year={2019},
  volume={},
  number={},
  pages={1153-1168},
  keywords={Long Term Evolution;Protocols;Authentication;Testing;Network architecture;LTE;Cellular-network;Control-plane;Dynamic-testing},
  doi={10.1109/SP.2019.00038}}

@article{de2024performance,
  title={Performance and Availability Challenges in Designing Resilient 5G Architectures},
  author={De Simone, Luigi and Di Mauro, Mario and Natella, Roberto and Postiglione, Fabio},
  journal={IEEE Transactions on Network and Service Management},
  year={2024},
  publisher={IEEE}
}

@article{boldyreva2012secure,
  title={Secure proxy signature schemes for delegation of signing rights},
  author={Boldyreva, Alexandra and Palacio, Adriana and Warinschi, Bogdan},
  journal={Journal of Cryptology},
  volume={25},
  pages={57--115},
  year={2012},
  publisher={Springer}
}

@article{darzi2024counter,
  title={Counter denial of service for next-generation networks within the artificial intelligence and post-quantum era},
  author={Darzi, Saleh and Yavuz, Attila A},
  journal={arXiv preprint arXiv:2408.04725},
  year={2024}
}

@inproceedings{bellare2006multi,
  title={Multi-signatures in the plain public-key model and a general forking lemma},
  author={Bellare, Mihir and Neven, Gregory},
  booktitle={Proceedings of the 13th ACM conference on Computer and communications security},
  year={2006}
}

@article{cao2019survey,
  title={A survey on security aspects for 3GPP 5G networks},
  author={Cao, Jin and Ma, Maode and Li, Hui and Ma, Ruhui and Sun, Yunqing and Yu, Pu and Xiong, Lihui},
  journal={IEEE communications surveys \& tutorials},
  volume={22},
  number={1},
  pages={170--195},
  year={2019},
  publisher={IEEE}
}

@manual{RRCSpec,
	title = {3GPP RRC Specification},
	note = {\url{https://www.etsi.org/deliver/etsi_ts/138300_138399/138331/18.01.00_60/ts_138331v180100p.pdf}},
	year={2024}
}

@article{darzi2023envisioning,
  title={Envisioning the future of cyber security in post-quantum era: A survey on pq standardization, applications, challenges and opportunities},
  author={Darzi, Saleh and Ahmadi, Kasra and Aghapour, Saeed and Yavuz, Attila Altay and Kermani, Mehran Mozaffari},
  journal={arXiv preprint arXiv:2310.12037},
  year={2023}
}

@inproceedings{grover1996fast,
  title={A fast quantum mechanical algorithm for database search},
  author={Grover, Lov K},
  booktitle={Proceedings of the 28th ACM symp. on The. of comp.},
  year={1996}
}

@misc{erricson,
	title = {Safeguarding telecom networks against advance threats with Ericsson’s cyber defense solutions},
	howpublished = {https://www.ericsson.com/en/blog/2025/1/safeguarding-telecom-networks-with-ericssons-defense-solutions},
	note = {Accessed: October, 2025},
	year={2025}
}

@inproceedings{DBLP:conf/ndss/SikeridisKD20,
  author       = {Dimitrios Sikeridis and
                  Panos Kampanakis and
                  Michael Devetsikiotis},
  title        = {Post-Quantum Authentication in {TLS} 1.3: {A} Performance Study},
  booktitle    = {27th Annual Network and Distributed System Security Symposium, {NDSS}
                  2020, San Diego, California, USA, February 23-26, 2020},
  publisher    = {The Internet Society},
  year         = {2020},
  timestamp    = {Mon, 01 Feb 2021 08:42:23 +0100},
  biburl       = {https://dblp.org/rec/conf/ndss/SikeridisKD20.bib},
  bibsource    = {dblp computer science bibliography, https://dblp.org}
}

@inproceedings{lotto2023baron,
  title={Baron: Base-station authentication through core network for mobility management in 5g networks},
  author={Lotto, Alessandro and Singh, Vaibhav and Ramasubramanian, Bhaskar and Brighente, Alessandro and Conti, Mauro and Poovendran, Radha},
  booktitle={Proceedings of the 16th ACM Conference on Security and Privacy in Wireless and Mobile Networks},
  pages={133--144},
  year={2023}
}

@inproceedings{gao2021evaluating,
  title={On evaluating delegated digital signing of broadcasting messages in 5G},
  author={Gao, Hui and Zhang, Yiming and Wan, Tao and Zhang, Jia},
  booktitle={2021 IEEE global communications conference (GLOBECOM)},
  pages={1--7},
  year={2021},
  organization={IEEE}
}

@inproceedings{ross2024fixing,
  title={Fixing Insecure Cellular System Information Broadcasts For Good},
  author={Ross, Alexander J and Reaves, Bradley and Nasser, Yomna and Cukierman, Gil and Jover, Roger Piqueras},
  booktitle={Proceedings of the 27th International Symposium on Research in Attacks, Intrusions and Defenses},
  pages={693--708},
  year={2024}
}

@inproceedings{alnashwan2024strong,
  title={Strong privacy-preserving universally composable aka protocol with seamless handover support for mobile virtual network operator},
  author={Alnashwan, Rabiah and Yang, Yang and Dong, Yilu and Gope, Prosanta},
  booktitle={Proceedings of the 2024 on ACM SIGSAC},
  pages={2057--2071},
  year={2024}
}

@inproceedings{sun20255g,
  title={5G-HCLS: An Authentication Protocol to Protect Bootstrapping Messages in 5G Network},
  author={Sun, Zhongxiu and Peng, Cheng},
  booktitle={2025 IEEE Wireless Communications and Networking Conference (WCNC)},
  year={2025},
  organization={IEEE}
}

@inproceedings{hussain2019insecure,
  title={Insecure connection bootstrapping in cellular networks: the root of all evil},
  author={Hussain, Syed Rafiul and Echeverria, Mitziu and Singla, Ankush and Chowdhury, Omar and Bertino, Elisa},
  booktitle={Proceedings of the 12th conference on security and privacy in wireless and mobile networks},
  pages={1--11},
  year={2019}
}

@inproceedings{zheng1996authentication,
  title={An authentication and security protocol for mobile computing},
  author={Zheng, Yuliang},
  booktitle={Mobile Communications: Technology, tools, applications, authentication and security IFIP World Conference on Mobile Communications Sep. 1996, Canberra, Australia},
  pages={249--257},
  year={1996},
  organization={Springer}
}

@inproceedings{vuppala2023post,
  title={Post-Quantum Secure Hybrid Methods for UE Primary Authentication in 6G with Forward Secrecy},
  author={Vuppala, Ramesh Chandra and Kumar, Dixit and Je, DongHyun and Sharma, Neha and Nigam, Anshuman and Kim, Dongmyoung},
  booktitle={GLOBECOM 2023-2023 IEEE Global Communications Conference},
  pages={2590--2595},
  year={2023},
  organization={IEEE}
}

@article{ramadan2020identity,
  title={Identity-based signature with server-aided verification scheme for 5G mobile systems},
  author={Ramadan, Mohammed and Liao, Yongjian and Li, Fagen and Zhou, Shijie},
  journal={IEEE Access},
  volume={8},
  pages={51810--51820},
  year={2020},
  publisher={IEEE}
}

@inproceedings{sengupta2024fast,
  title={Fast Verification of Online/Offline Threshold Signatures for 5G IoT},
  author={Sengupta, Binanda and Lakshminarayanan, Anantharaman},
  booktitle={2024 IEEE International Conference on Advanced Networks and Telecommunications Systems (ANTS)},
  pages={1--6},
  year={2024},
  organization={IEEE}
}

@article{scalise2024applied,
  title={An Applied Analysis of Securing 5G/6G Core Networks with Post-Quantum Key Encapsulation Methods},
  author={Scalise, Paul and Garcia, Robert and Boeding, Matthew and Hempel, Michael and Sharif, Hamid},
  journal={Electronics},
  volume={13},
  number={21},
  pages={4258},
  year={2024},
  publisher={MDPI}
}

@article{ko20255g,
  title={5G-AKA-HPQC: Hybrid Post-Quantum Cryptography Protocol for Quantum-Resilient 5G Primary Authentication with Forward Secrecy},
  author={Ko, Yongho and Pawana, I and You, Ilsun},
  journal={arXiv preprint arXiv:2502.02851},
  year={2025}
}

@article{al2024post,
  title={Post-quantum lattice-based forward-secure authentication scheme using fog computing in 5G-assisted vehicular networks},
  author={Al-Mekhlafi, Zeyad Ghaleb and Al-Shareeda, Mahmood A and Mohammed, Badiea Abdulkarem and Alsadhan, Abeer Abdullah},
  year={2024}
}

@inproceedings{damir2022beyond,
  title={A beyond-5G authentication and key agreement protocol},
  author={Damir, Mohamed Taoufiq and Meskanen, Tommi and Ramezanian, Sara and Niemi, Valtteri},
  booktitle={International Conference on Network and System Security},
  pages={249--264},
  year={2022},
  organization={Springer}
}

@inproceedings{rupprecht2019breaking,
  title={Breaking LTE on layer two},
  author={Rupprecht, David and Kohls, Katharina and Holz, Thorsten and P{\"o}pper, Christina},
  booktitle={2019 IEEE Symposium on Security and Privacy (SP)},
  pages={1121--1136},
  year={2019},
  organization={IEEE}
}

@inproceedings{rossi2024enhancing,
  title={Enhancing the 5G-AKA protocol with post-quantum digital signature method},
  author={Rossi Figlarz, Gabriel and Passuelo Hessel, Fabiano},
  booktitle={Intl. Conf. on Advanced Information Networking and Applications},
  pages={99--110},
  year={2024},
  organization={Springer}
}

@misc{wendt2019rfc,
  title={RFC 8588: Personal Assertion Token (PaSSporT) Extension for Signature-based Handling of Asserted information using toKENs (SHAKEN)},
  author={Wendt, C and Barnes, M},
  year={2019},
  publisher={RFC Editor}
}

@article{perrig2003tesla,
  title={TESLA broadcast authentication},
  author={Perrig, Adrian and Tygar, J Doug and Perrig, Adrian and Tygar, JD},
  journal={Secure Broadcast Communication: In Wired and Wireless Networks},
  pages={29--53},
  year={2003},
  publisher={Springer}
}

@article{yu2024protecting,
  title={Protecting unauthenticated messages in LTE/5G mobile networks: A two-level Hierarchical Identity-Based Signature (HIBS) solution},
  author={Yu, Chuan and Chen, Shuhui and Xing, Qianqian},
  journal={Computer Networks},
  volume={254},
  pages={110814},
  year={2024},
  publisher={Elsevier}
}

@article{lee2009extended,
  title={An extended certificate-based authentication and security protocol for mobile networks},
  author={Lee, Cheng-Chi and Liao, I-En and Hwang, Min-Shiang},
  journal={Information Technology and Control},
  volume={38},
  number={1},
  year={2009}
}

@inproceedings{cozzo2019sharing,
  title={Sharing the LUOV: threshold post-quantum signatures},
  author={Cozzo, Daniele and Smart, Nigel P},
  booktitle={IMA International Conference on Cryptography and Coding},
  pages={128--153},
  year={2019},
  organization={Springer}
}

@article{mitchell2020impact,
  title={The impact of quantum computing on real-world security: A 5G case study},
  author={Mitchell, Chris J},
  journal={Computers \& Security},
  volume={93},
  year={2020},
  publisher={Elsevier}
}

@inproceedings{pfitzmann1991fail,
  title={Fail-stop signatures: Principles and applications},
  author={Pfitzmann, Birgit},
  booktitle={Proc. Compsec},
  volume={91},
  pages={125--134},
  year={1991},
  organization={Citeseer}
}

@article{vikhrova2022multi,
  title={Multi-sim support in 5g evolution: Challenges and opportunities},
  author={Vikhrova, Olga and Pizzi, Sara and Terzani, Alessio and Araujo, Lian and Orsino, Antonino and Araniti, Giuseppe},
  journal={IEEE Communications Standards Magazine},
  volume={6},
  number={2},
  pages={64--70},
  year={2022},
  publisher={IEEE}
}

@inproceedings{fu2021secure,
  title={Secure two-party dilithium signing protocol},
  author={Fu, Yu and Zhao, Xiufeng},
  booktitle={2021 17th International conference on computational intelligence and security (CIS)},
  pages={444--448},
  year={2021},
  organization={IEEE}
}

@article{soni2021falcon,
  title={Falcon},
  author={Soni, Deepraj and Basu, Kanad and Nabeel, Mohammed and Aaraj, Najwa and Manzano, Marc and Karri, Ramesh},
  journal={Hardware Architectures for Post-Quantum Digital Signature Schemes},
  pages={31--41},
  year={2021},
  publisher={Springer}
}

@inproceedings{schnorr1990efficient,
  title={Efficient identification and signatures for smart cards},
  author={Schnorr, Claus-Peter},
  booktitle={Advances in Cryptology: CRYPTO’89 Proceedings 9},
  year={1990},
  organization={Springer}
}

@article{yaksetig2024extremely,
  title={Extremely Simple (Almost) Fail-Stop ECDSA Signatures},
  author={Yaksetig, Mario},
  journal={Cryptology ePrint Archive},
  year={2024}
}

@inproceedings{galindo2009schnorr,
  title={A Schnorr-like lightweight identity-based signature scheme},
  author={Galindo, David and Garcia, Flavio D},
  booktitle={Progress in Cryptology: AFRICACRYPT 2009: Second International Conference on Cryptology in Africa, Gammarth, Tunisia, June, Proceedings 2},
  pages={135--148},
  year={2009},
  organization={Springer}
}

@inproceedings{komlo2021frost,
  title={FROST: flexible round-optimized Schnorr threshold signatures},
  author={Komlo, Chelsea and Goldberg, Ian},
  booktitle={Selected Areas in Cryptography: 27th International Conference, Halifax, NS, Canada (Virtual Event), October 21-23, 2020, Revised Selected Papers 27},
  pages={34--65},
  year={2021},
  organization={Springer}
}

@article{le2022efficient,
  title={Efficient dynamic proof of retrievability for cold storage},
  author={Le, Tung and Huang, Pengzhi and Yavuz, Attila A and Shi, Elaine and Hoang, Thang},
  journal={Cryptology ePrint Archive},
  year={2022}
}

@inproceedings{nouma2023practical,
  title={Practical cryptographic forensic tools for lightweight internet of things and cold storage systems},
  author={Nouma, Saif E and Yavuz, Attila A},
  booktitle={Proceedings of the 8th ACM/IEEE Conference on Internet of Things Design and Implementation},
  pages={340--353},
  year={2023}
}

@inproceedings{singla2021look,
  title={Look before you leap: Secure connection bootstrapping for 5g networks to defend against fake base-stations},
  author={Singla, Ankush and Behnia, Rouzbeh and Hussain, Syed Rafiul and Yavuz, Attila and Bertino, Elisa},
  booktitle={Proceedings of the 2021 ACM Asia Conference on Computer and Communications Security},
  year={2021}
}

@inproceedings{safavi2000threshold,
  title={Threshold fail-stop signature schemes based on discrete logarithm and factorization},
  author={Safavi-Naini, Rei and Susilo, Willy},
  booktitle={International Workshop on Information Security},
  pages={292--307},
  year={2000},
  organization={Springer}
}

@inproceedings{boschini2024s,
  title={That’s not my signature! Fail-stop signatures for a post-quantum world},
  author={Boschini, Cecilia and Dahari, Hila and Naor, Moni and Ronen, Eyal},
  booktitle={Annual International Cryptology Conference},
  pages={107--140},
  year={2024},
  organization={Springer}
}

@inproceedings{wuthier2024fake,
  title={Fake base station detection and blacklisting},
  author={Wuthier, Simeon and Kim, Jinoh and Kim, Jonghyun and Chang, Sang-Yoon},
  booktitle={2024 33rd International Conference on Computer Communications and Networks (ICCCN)},
  pages={1--9},
  year={2024},
  organization={IEEE}
}

@article{polese2023understanding,
  title={Understanding O-RAN: Architecture, interfaces, algorithms, security, and research challenges},
  author={Polese, Michele and Bonati, Leonardo and D’oro, Salvatore and Basagni, Stefano},
  journal={IEEE Communications Surveys \& Tutorials},
  volume={25},
  number={2},
  year={2023},
  publisher={IEEE}
}

@inproceedings{janzen2024oh,
  title={Oh No, My $\{$RAN$\}$! Breaking Into an $\{$O-RAN$\}$ 5G Indoor Base Station},
  author={Janzen, Leon and Becker, Lucas and Wiesen{\"a}cker, Colin and Hollick, Matthias},
  booktitle={18th USENIX WOOT Conference on Offensive Technologies (WOOT 24)},
  pages={101--115},
  year={2024}
}

@article{shamir1979share,
  title={How to share a secret},
  author={Shamir, Adi},
  journal={Communications of the ACM},
  volume={22},
  number={11},
  pages={612--613},
  year={1979},
  publisher={ACm New York, NY, USA}
}

@article{schnorr1991efficient,
  title={Efficient signature generation by smart cards},
  author={Schnorr, Claus-Peter},
  journal={Journal of cryptology},
  volume={4},
  pages={161--174},
  year={1991},
  publisher={Springer}
}

@article{fourati2021comprehensive,
  title={Comprehensive survey on self-organizing cellular network approaches applied to 5G networks},
  author={Fourati, Hasna and Maaloul, Rihab and Chaari, Lamia and Jmaiel, Mohamed},
  journal={Computer Networks},
  volume={199},
  pages={108435},
  year={2021},
  publisher={Elsevier}
}

@inproceedings{chow2004secure,
  title={Secure hierarchical identity based signature and its application},
  author={Chow, Sherman SM and Hui, Lucas CK and Yiu, Siu Ming and Chow, Kam-Pui},
  booktitle={Information and Communications Security: 6th International Conference, ICICS 2004, Spain, October 27-29, 2004. Proceedings 6},
  pages={480--494},
  year={2004},
  organization={Springer}
}

\end{document}